\newtheorem{theorem}{Theorem}[section]
\newtheorem{lemma}[theorem]{Lemma}
\theoremstyle{definition}
\newtheorem{definition}[theorem]{Definition}
\newtheorem{proposition}[theorem]{Proposition}
\theoremstyle{remark}
\newtheorem{remark}[theorem]{Remark}
\numberwithin{equation}{section}
\newcommand{\quat}{\mathbb H}
\newcommand{\an}{\mathfrak{a}}
\newcommand{\q}{\mathbf{q}}
\newcommand{\I}{\mathbb{I}}
\newcommand{\HI}{\mathfrak{H}}
\newcommand{\D}{\mathfrak{D}}
\newcommand{\C}{\mathbb{C}}
\newcommand{\qu}{\mathfrak{q}}
\newcommand{\pu}{\mathfrak{p}}
\newcommand{\as}{\mathsf{a}}
\newcommand{\asd}{\mathsf{a}^{\dagger}}
\newcommand{\oqu}{\overline{\mathfrak{q}}}
\newcommand{\opu}{\overline{\mathfrak{p}}}
\newcommand{\HIB}{\mathfrak{H}^B_r}
\newcommand{\Ci}{\mathfrak{C}_i}
\begin{document}
\title[Squeezed states ]{ Canonical, squeezed and fermionic coherent states in a right quaternionic Hilbert space with a left multiplication on it}
\author{ K. Thirulogasanthar$^{\dagger}$, B. Muraleetharan$^{\ddagger}$}
\address{$^{\ddagger}$ Department of Computer Science and Software Engineering, Concordia University, 1455 De Maisonneuve Blvd. West, Montreal, Quebec, H3G 1M8, Canada.}
\address{$^{\ddagger}$ Department of mathematics and Statistics, University of Jaffna, Thirunelveli, Sri Lanka.}
\email{santhar@gmail.com, bbmuraleetharan@jfn.ac.lk}
\subjclass{Primary 81R30, 46E22}
\date{\today}
\thanks{K. Thirulogasanthar would like to thank the, FQRNT, Fonds de la Recherche  Nature et  Technologies (Quebec, Canada) for partial financial support Under the grant number 2017-CO-201915. Authors would like to thank I. Sabadini for discussions.}
\keywords{Quaternion, Displacement operator, Squeezed operator, Coherent states, Fermionic coherent states.}
\pagestyle{myheadings}
\dedicatory{Dedicated to the memory of S. Twareque Ali}
\begin{abstract}
  Using a left multiplication defined on a right quaternionic Hilbert space, we shall demonstrate that various classes of coherent states such as the canonical coherent states, pure squeezed states, fermionic coherent states can be defined with all the desired properties on a right quaternionic Hilbert space. Further, we shall also  demonstrate squeezed states can be defined on the same Hilbert space, but the noncommutativity of quaternions prevents us in getting the desired results.
\end{abstract}

\maketitle

\section{Introduction}\label{sec_intro}
Quantum mechanics can be formulated over the complex and the quaternionic numbers, see \cite{bvn, Ad, Ali}. In recent times, new mathematical tools in quaternionic analysis became available in the literature. In particular the spectral theory and the functional calculus. As a consequence, there has been a renewed interest in the quaternionic quantum mechanics.
As in the complex quantum mechanics,  states  are represented by vectors of a separable quaternionic Hilbert space and observables are represented by quaternionic linear and self-adjoint operators\cite{Ad}

Until the most recent times, an appropriate spectral theory was missing since there was not a satisfactory notion of spectrum. This difficulty has been solved with the introduction of the notion of S-spectrum (see \cite{NFC}) and, accordingly, with a proof of the spectral theorem for normal operators, see \cite{ack}.

Sine for any linear operator $A$ and a quaternion $\qu\in\quat$, $(\qu A)^{\dagger}\not=\oqu A^{\dagger}$. Due to this, in quaternionic quantum mechanics there is no proper momentum operator. For various attempts and their drawbacks we refer the reader to \cite{Ad}. However, in \cite{MTI}, we have given a through discussion on the possibility of defining various momentum operators. In fact we have shown that, by using the notion of left multiplication in a right quaternionic Hilbert space it is possible to define a linear self-adjoint momentum operator in complete analogy with the complex case.

With the right multiplication on a right quaternion Hilbert space a displacement operator similar to the harmonic oscillator displacement operator cannot be defined as a representation of the Fock space \cite{Ad2, Thi2}. However, in \cite{MTI} we have shown that, with the aid of a left multiplication defined on a right quaternionic Hilbert space, an appropriate harmonic oscillator displacement operator can be defined. We have proved that this operator is square integrable, irreducible and a unitary representation and also it satisfies most of the properties of its complex counterpart.

In \cite{TM} we introduce and studied the squeeze operator which is formally defined as in the complex setting but with the left multiplication on a right quaternion Hilbert space. We have shown, in the quaternion setting, that pure squeezed states can be obtained with all the desired properties. However,
due to the non-commutative nature of quaternions, there is an intrinsic issue if one is aimed to obtain relations involving both the displacement and the squeeze operator. Suitable relations can be obtained only quaternion slice-wise.

This article is written as a review article, except few new materials, indicating all these aspects with appropriate references.
The plan of the paper is as follows. Section 2 contains some preliminaries on quaternions, right quaternionic Hilbert spaces and the notion  of left multiplication. Section 3 studies the Bargmann space of regular functions, the displacement operator, the squeeze operator and some of its properties. The expectation values and the variances of the creation and annihilation operator and of the quadrature operators are computed in this section in terms of pure squeezed states. The fourth section is devoted to the relation involving displacement, squeeze operators to obtain squeezed states on a quaternion slice. We obtain a result similar to the one in the complex case at the end of section 3 on the whole set of quaternions, but due the noncommutativity  expectation values and variances can be computed only on quaternionic slices which is also demonstrated in section 4. Section five deals with fermionic states for one mode systems which is not reported elsewhere. Section six ends the manuscript with a conclusion.

\section{Mathematical preliminaries}
In this section we recall some basic facts about quaternions, their complex matrix representation, quaternionic Hilbert spaces as needed here. For details we refer the reader to \cite{Ad, Vis, Za, Al1}.
\subsection{Quaternions}
Let $\quat$ denote the field of quaternions. Its elements are of the form $\qu=q_0+q_1i+q_2j+q_3k$ where $q_0,q_1,q_2$ and $q_3$ are real numbers, and $i,j,k$ are imaginary units such that $i^2=j^2=k^2=-1$, $ij=-ji=k$, $jk=-kj=i$ and $ki=-ik=j$. The quaternionic conjugate of $\qu$ is defined to be $\overline{\qu} = q_0 - q_1i - q_2j - q_3k$. Quaternions can be represented by $2\times 2$ complex matrices:
\begin{equation}
\qu = q_0 \sigma_{0} + i \q \cdot \underline{\sigma},
\end{equation}
with $q_0 \in \mathbb R , \quad \q = (q_1, q_2, q_3)
\in \mathbb R^3$, $\sigma_0 = \mathbb{I}_2$, the $2\times 2$ identity matrix, and
$\underline{\sigma} = (\sigma_1, -\sigma_2, \sigma_3)$, where the
$\sigma_\ell, \; \ell =1,2,3$ are the usual Pauli matrices. The quaternionic imaginary units are identified as, $i = \sqrt{-1}\sigma_1, \;\; j = -\sqrt{-1}\sigma_2, \;\; k = \sqrt{-1}\sigma_3$. Thus,
\begin{equation}
\qu = \left(\begin{array}{cc}
q_0 + i q_3 & -q_2 + i q_1 \\
q_2 + i q_1 & q_0 - i q_3
\end{array}\right) \qquad 
\label{q3}
\end{equation}
and $\overline{\qu} = \qu^\dag\quad \text{(matrix adjoint)}\; .$
Using the polar coordinates:
\begin{eqnarray*}
	q_0 &=& r \cos{\theta}, \\
	q_1 &=& r \sin{\theta} \sin{\phi} \cos{\psi}, \\
	q_2 &=& r \sin{\theta} \sin{\phi} \sin{\psi}, \\
	q_3 &=& r \sin{\theta} \cos{\phi},
\end{eqnarray*}
where $(r,\phi,\theta,\psi)  \in [0,\infty)\times[0,\pi]\times[0,2\pi)^{2}$, we may write
\begin{equation}
\qu = A(r) e^{i \theta \sigma(\widehat{n})}
\label{q4},
\end{equation}
where
\begin{equation}
A(r) = r\mathbb \sigma_0
\end{equation}
and\begin{equation}
\sigma(\widehat{n}) = \left(\begin{array}{cc}
\cos{\phi} & \sin{\phi} e^{i\psi} \\
\sin{\phi} e^{-i\psi} & -\cos{\phi}
\end{array}\right).
\label{q5}\end{equation}
The matrices
$A(r)$ and $\sigma(\widehat{n})$ satisfy the conditions,
\begin{equation}
A(r) = A(r)^\dagger,~\sigma(\widehat{n})^2 = \sigma_0
,~\sigma(\widehat{n})^\dagger = \sigma(\widehat{n})
\label{san1}
\end{equation}
and
$\lbrack A(r), \sigma(\widehat{n}) \rbrack = 0.$
Note that a real norm on $\quat$ is defined by  $$\vert\qu\vert^2  := \overline{\qu} \qu = r^2 \sigma_0 = (q_0^2 +  q_1^2 +  q_2^2 +  q_3^2).$$
Note also that for ${\pu},\qu\in \quat$, we have $\overline{{\pu}\qu}=\overline{\qu}~\overline{{\pu}}$, $\pu\qu\not=\qu\pu$, $\qu\overline{{ \qu}}=\overline{{\qu}}\qu$, and real numbers commute with quaternions.
Quaternions can also be interpreted as a sum of a scalar and a vector by writing $$\qu=q_0+q_1i+q_2j+q_3k=(q_0,\q);$$ where $\q=q_1i+q_2j+q_3k$.
We borrow the materials as needed here from \cite{Al1}.  Let
\begin{eqnarray*}
	\mathbb{S}&=&\{I=x_1i+x_2j+x_3k~\vert
	~x_1,x_2,x_3\in\mathbb{R},~x_1^2+x_2^2+x_3^2=1\},
\end{eqnarray*}
we call it a quaternion sphere.
 \begin{proposition}\cite{Al1}\label{Pr1}
	For any non-real quaternion $\qu\in \quat\smallsetminus\mathbb{R}$, there exist, and are unique, $x,y\in\mathbb{R}$ with $y>0$, and $I_\qu\in\mathbb{S}$ such that $\qu=x+I_\qu y$.
\end{proposition}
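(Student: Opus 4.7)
The plan is to construct the decomposition explicitly from the standard scalar-plus-vector form of a quaternion, and then read off uniqueness by matching scalar and vector parts.

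First, I would write $\qu = q_0 + q_1 i + q_2 j + q_3 k$ and set $\q = q_1 i + q_2 j + q_3 k$ so that $\qu = q_0 + \q$ in the scalar-vector notation already introduced in the excerpt. Because $\qu \notin \mathbb{R}$, at least one of $q_1, q_2, q_3$ is nonzero, so the real number $\lvert\q\rvert = \sqrt{q_1^2 + q_2^2 + q_3^2}$ is strictly positive. I would then take
\[
x := q_0, \qquad y := \lvert\q\rvert, \qquad I_\qu := \frac{\q}{\lvert\q\rvert} = \frac{q_1}{y} i + \frac{q_2}{y} j + \frac{q_3}{y} k,
\]
and immediately observe that $y > 0$ and that $I_\qu \in \mathbb{S}$ since the sum of squares of its real components equals $1$. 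A direct substitution gives $x + I_\qu y = q_0 + \q = \qu$, settling existence.

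For uniqueness, I would suppose $\qu = x + I_\qu y = x' + I'_{\qu} y'$ with $x,x',y,y' \in \mathbb{R}$, $y,y' > 0$, and $I_\qu, I'_\qu \in \mathbb{S}$. Since any $I \in \mathbb{S}$ is purely imaginary, the scalar (real) parts on the two sides must agree, forcing $x = x'$. Subtracting then yields $I_\qu y = I'_\qu y'$ as pure vector quaternions. Taking the quaternionic norm and using $\lvert I_\qu \rvert = \lvert I'_\qu \rvert = 1$ together with $y,y' > 0$ gives $y = y'$, and then dividing by $y$ gives $I_\qu = I'_\qu$.

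There is no real obstacle here: the argument is essentially the quaternionic analogue of writing a non-real complex number $a+bi$ with $b \neq 0$ uniquely as $x + I y$ with $y > 0$ and $I = \pm i$, and the only thing to be careful about is the sign convention that forces $y > 0$, which is exactly what pins down the choice of unit imaginary $I_\qu$ (versus $-I_\qu$) and makes the decomposition unique on $\quat \smallsetminus \mathbb{R}$.
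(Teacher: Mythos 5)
Your proof is correct and is the standard argument: the paper itself gives no proof (it cites the result from the reference), and your decomposition into scalar and vector parts with normalization of the vector part, followed by uniqueness via matching real parts and taking norms, is exactly the canonical way this is established. Nothing is missing.
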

For every quaternion $I\in\mathbb{S}$, the complex plane $\C_I=\mathbb{R}+I\mathbb{R}$ passing through the origin, and containing $1$ and $I$, is called a quaternion slice. Thereby, we can see that
\begin{equation}\label{Eq1}
\quat=\bigcup_{I\in\mathbb{S}}\C_I\quad\text{and}\quad\bigcap_{I\in\mathbb{S}} \C_I=\mathbb{R}
\end{equation}
One can also easily see that $\C_I\subset \quat$ is commutative, while, elements from two different quaternion slices, $\C_I$ and $\C_J$ (for $I, J\in\mathbb{S}$ with $I\not=J$), do not necessarily commute.
\subsection{Quaternionic Hilbert spaces}
In this subsection we introduce right quaternionic Hilbert spaces. For details we refer the reader to \cite{Ad}. We also define the Hilbert space of square integrable functions on quaternions based on \cite{Vis, Gu,Al1}.
\subsubsection{Right Quaternionic Hilbert Space}
Let $V_{\quat}^{R}$ be a linear vector space under right multiplication by quaternionic scalars (again $\quat$ standing for the field of quaternions).  For $f,g,h\in V_{\quat}^{R}$ and $\qu\in \quat$, the inner product
$$\langle\cdot\mid\cdot\rangle:V_{\quat}^{R}\times V_{\quat}^{R}\longrightarrow \quat$$
satisfies the following properties
\begin{enumerate}
	\item[(i)]
	$\overline{\langle f\mid g\rangle}=\langle g\mid f\rangle$
	\item[(ii)]
	$\|f\|^{2}=\langle f\mid f\rangle>0$ unless $f=0$, a real norm
	\item[(iii)]
	$\langle f\mid g+h\rangle=\langle f\mid g\rangle+\langle f\mid h\rangle$
	\item[(iv)]
	$\langle f\mid g\qu\rangle=\langle f\mid g\rangle\qu$
	\item[(v)]
	$\langle f\qu\mid g\rangle=\overline{\qu}\langle f\mid g\rangle$
\end{enumerate}
where $\overline{\qu}$ stands for the quaternionic conjugate. We assume that the
space $V_{\quat}^{R}$ is complete under the norm given above. Then,  together with $\langle\cdot\mid\cdot\rangle$ this defines a right quaternionic Hilbert space, which we shall assume to be separable. Quaternionic Hilbert spaces share most of the standard properties of complex Hilbert spaces. In particular, the Cauchy-Schwartz inequality holds on quaternionic Hilbert spaces as well as the Riesz representation theorem for their duals.  Thus, the Dirac bra-ket notation
can be adapted to quaternionic Hilbert spaces:
$$\mid f\qu\rangle=\mid f\rangle\qu,\hspace{1cm}\langle f\qu\mid=\overline{\qu}\langle f\mid\;, $$
for a right quaternionic Hilbert space, with $\vert f\rangle$ denoting the vector $f$ and $\langle f\vert$ its dual vector. Similarly the left quaternionic Hilbert space $V_{\quat}^{L}$ can also be described, see for more detail \cite{Ad,MuTh,Thi1}.
The field of quaternions $\quat$ itself can be turned into a left quaternionic Hilbert space by defining the inner product $\langle \qu \mid \qu^\prime \rangle = \qu \qu^{\prime\dag} = \qu\overline{\qu^\prime}$ or into a right quaternionic Hilbert space with  $\langle \qu \mid \qu^\prime \rangle = \qu^\dag \qu^\prime = \overline{\qu}\qu^\prime$. Further note that, due to the non-commutativity of quaternions the sum
$\sum_{m=0}^{\infty}\pu^m\qu^m/m!$
cannot be written as $\text{exp}(\pu\qu).$ However, in any Hilbert space the norm convergence implies the convergence of the series and
$\sum_{m=0}^{\infty}\left\vert \pu^m\qu^m/m!\right\vert
\leq e^{|\pu||\qu|},$ therefore $\sum_{m=0}^{\infty}\pu^m\qu^m/m!=e^{\pu\qu}_*$ converges, where $e^{\pu\qu}_*$ is given as a quaternion star product \cite{NFC}
\subsubsection{Quaternionic Hilbert Spaces of Square Integrable Functions}
Let $(X, \mu)$ be a measure space and $\quat$  the field of quaternions, then
$$L^2_\quat(X, d\mu)=\left\{f:X\rightarrow \quat \left|  \int_X|f(x)|^2d\mu(x)<\infty \right.\right\}$$
\noindent
is a right quaternionic Hilbert space which is denoted by $L^2_\quat(X,\mu)$, with the (right) scalar product
\begin{equation}
\langle f \mid g\rangle =\int_X\overline{ f(x)}{g(x)} d\mu(x),
\label{left-sc-prod}
\end{equation}
where $\overline{f(x)}$ is the quaternionic conjugate of $f(x)$, and (right)  scalar multiplication $f\an, \; \an\in \quat,$ with $(f\an)(\qu) = f(\qu)\an$ (see \cite{Gu,Vis} for details). Similarly, one could define a left quaternionic Hilbert space of square integrable functions.
\subsection{Left Scalar Multiplications on $V_{\quat}^{R}$.}
We shall extract the definition and some properties of left scalar multiples of vectors on $V_{\quat}^R$ from \cite{ghimorper} as needed for the development of the manuscript. The left scalar multiple of vectors on a right quaternionic Hilbert space is an extremely non-canonical operation associated with a choice of preferred Hilbert basis. Now the Hilbert space $V_{\quat}^{R}$ has a Hilbert basis
\begin{equation}\label{b1}
\mathcal{O}=\{\varphi_{k}\,\mid\,k\in N\},
\end{equation}
where $N$ is a countable index set.
The left scalar multiplication `$\cdot$' on $V_{\quat}^{R}$ induced by $\mathcal{O}$ is defined as the map $\quat\times V_{\quat}^{R}\ni(\qu,\phi)\longmapsto \qu\cdot\phi\in V_{\quat}^{R}$ given by
\begin{equation}\label{LPro}
\qu\cdot\phi:=\sum_{k\in N}\varphi_{k}\qu\langle \varphi_{k}\mid \phi\rangle,
\end{equation}
for all $(\qu,\phi)\in\quat\times V_{\quat}^{R}$. Since all left multiplications are made with respect to some basis, assume that the basis $\mathcal{O}$ given by (\ref{b1}) is fixed.
\begin{proposition}\cite{ghimorper}\label{lft_mul}
	The left product defined in (\ref{LPro}) satisfies the following properties. For every $\phi,\psi\in V_{\quat}^{R}$ and $\pu,\qu\in\quat$,
	\begin{itemize}
		\item[(a)] $\qu\cdot(\phi+\psi)=\qu\cdot\phi+\qu\cdot\psi$ and $\qu\cdot(\phi\pu)=(\qu\cdot\phi)\pu$.
		\item[(b)] $\|\qu\cdot\phi\|=|\qu|\|\phi\|$.
		\item[(c)] $\qu\cdot(\pu\cdot\phi)=(\qu\pu\cdot\phi)$.
		\item[(d)] $\langle\overline{\qu}\cdot\phi\mid\psi\rangle
		=\langle\phi\mid\qu\cdot\psi\rangle$.
		\item[(e)] $r\cdot\phi=\phi r$, for all $r\in \mathbb{R}$.
		\item[(f)] $\qu\cdot\varphi_{k}=\varphi_{k}\qu$, for all $k\in N$.
	\end{itemize}
\end{proposition}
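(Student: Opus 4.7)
The plan is to verify each of the six identities by direct computation, using only the inner-product axioms (i)--(v) and the defining formula (\ref{LPro}), together with two standard facts about the Hilbert basis $\mathcal{O}$: orthonormality $\langle\varphi_k\mid\varphi_l\rangle=\delta_{kl}$, and the Fourier expansion $\phi=\sum_k\varphi_k\langle\varphi_k\mid\phi\rangle$ with Parseval identity $\|\phi\|^2=\sum_k|\langle\varphi_k\mid\phi\rangle|^2$. A preliminary step is to confirm that the series (\ref{LPro}) actually converges in $V_\quat^R$, so that the ensuing manipulations are legitimate: the partial sums $S_n=\sum_{k\le n}\varphi_k\qu\langle\varphi_k\mid\phi\rangle$ satisfy $\|S_n-S_m\|^2=|\qu|^2\sum_{m<k\le n}|\langle\varphi_k\mid\phi\rangle|^2$, which tends to zero by Parseval. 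The same calculation, taken to the limit, already yields (b): $\|\qu\cdot\phi\|^2=|\qu|^2\|\phi\|^2$.

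With convergence secured, properties (a), (c), (e), (f) follow almost mechanically. Item (f) reduces the series to a single surviving term thanks to orthonormality. Item (e) then uses the commutativity of $r\in\mathbb{R}$ with quaternions to write $r\cdot\phi=\sum_k\varphi_k\langle\varphi_k\mid\phi\rangle r=\phi r$. Additivity in (a) is just axiom (iii) applied inside the series, and $\qu\cdot(\phi\pu)=(\qu\cdot\phi)\pu$ follows from (iv) because $\langle\varphi_k\mid\phi\pu\rangle=\langle\varphi_k\mid\phi\rangle\pu$ allows the factor $\pu$ to be pulled to the right of the whole sum. For (c), I first compute $\langle\varphi_k\mid\pu\cdot\phi\rangle=\pu\langle\varphi_k\mid\phi\rangle$ by (iv) and orthonormality, then substitute into the definition of $\qu\cdot(\pu\cdot\phi)$ to obtain $\sum_k\varphi_k(\qu\pu)\langle\varphi_k\mid\phi\rangle=(\qu\pu)\cdot\phi$.

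The only item requiring a little care is (d). Expanding $\langle\overline{\qu}\cdot\phi\mid\psi\rangle$ via (v) produces $\sum_k\overline{\overline{\qu}\langle\varphi_k\mid\phi\rangle}\langle\varphi_k\mid\psi\rangle=\sum_k\overline{\langle\varphi_k\mid\phi\rangle}\,\qu\,\langle\varphi_k\mid\psi\rangle$, while expanding $\langle\phi\mid\qu\cdot\psi\rangle$ via (iii), (iv), and (i) yields the same sum. The main subtlety throughout is the bookkeeping of the order of non-commuting factors, but because each quaternion is kept in its original position at every step and all scalar prefactors that leave the sum ($|\qu|^2$, $\delta_{kl}$) are real, no commutativity is ever invoked outside of the reals, and no genuine obstacle arises; the argument is essentially a careful translation of the corresponding verifications in the complex case.
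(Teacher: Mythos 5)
Your verification is correct: the paper itself states this proposition without proof, citing Ghiloni--Moretti--Perotti, and your term-by-term computation from the definition (\ref{LPro}), the inner-product axioms (i)--(v), orthonormality, and Parseval is exactly the standard argument given there. The only points worth making explicit are the ones you already gesture at --- Cauchy convergence of the defining series, and continuity of the inner product and of right scalar multiplication so that they may be interchanged with the infinite sum in items (a), (c), and (d) --- and your bookkeeping of the non-commuting factors is accurate throughout.
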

\begin{remark}\label{RR1}
 It is immediate that $(\pu+\qu)\cdot\phi=\pu\cdot\phi+\qu\cdot\phi$, for all $\pu,\qu\in\quat$ and $\phi\in V_{\quat}^{R}$. Moreover, with the aid of (b) in above Proposition (\ref{lft_mul}), we can have, if $\{\phi_n\}$ in $V_\quat^R$ such that $\phi_n\longrightarrow\phi$, then $\qu\cdot\phi_n\longrightarrow\qu\cdot\phi$. Also if $\sum_{n}\phi_n$ is a convergent sequence in $V_\quat^R$, then $\qu\cdot(\sum_{n}\phi_n)=\sum_{n}\qu\cdot\phi_n$.
\end{remark}
Furthermore, the quaternionic scalar multiplication of $\quat$-linear operators is also defined in \cite{ghimorper}. For any fixed $\qu\in\quat$ and a given right $\quat$-linear operator $A:\D(A)\longrightarrow V_{\quat}^{R}$, the left scalar multiplication `$\cdot$' of $A$ is defined as a map $\qu \cdot A:\D(A)\longrightarrow V_{\quat}^{R}$ by the setting
\begin{equation}\label{lft_mul-op}
(\qu\cdot A)\phi:=\qu\cdot (A\phi)=\sum_{k\in N}\varphi_{k}\qu\langle \varphi_{k}\mid A\phi\rangle,
\end{equation}
for all $\phi\in \D(A)$. It is straightforward that $\qu A$ is a right $\quat$-linear operator. If $\qu\cdot\phi\in \D(A)$, for all $\phi\in \D(A)$, one can define right scalar multiplication `$\cdot$' of the right $\quat$-linear operator $A:\D(A)\longrightarrow V_{\quat}^{R}$ as a map $ A\cdot\qu:\D(A)\longrightarrow V_{\quat}^{R}$ by the setting
\begin{equation}\label{rgt_mul-op}
(A\cdot\qu )\phi:=A(\qu\cdot \phi),
\end{equation}
for all $\phi\in \D(A)$. It is also right $\quat$-linear operator. One can easily obtain that, if $\qu\cdot\phi\in \D(A)$, for all $\phi\in \D(A)$ and $\D(A)$ is dense in $V_{\quat}^{R}$, then
\begin{equation}\label{sc_mul_aj-op}
(\qu\cdot A)^{\dagger}=A^{\dagger}\cdot\overline{\qu}~\mbox{~and~}~
(A\cdot\qu)^{\dagger}=\overline{\qu}\cdot A^{\dagger}.
\end{equation}

\section{Bargmann space of regular functions}
The Bargmann space of left regular functions $\HI^B_r$ is a closed subspace of the right Hilbert space $L_{\quat}(\quat, d\zeta(r, \theta, \phi, \psi))$, where $d\zeta(r, \theta, \phi, \psi)=\frac{1}{4\pi} e^{-r^2}\sin{\phi} dr d\theta d\phi d\psi$. An orthonormal basis of this space is given by the monomials (which are both left and right regular)
$$\Phi_n(\qu)=\frac{\qu^n}{\sqrt{n!}};\quad n=0,1,2,\cdots.$$
There is also an associated reproducing kernel
$$K_B(\qu,\overline{\pu})=\sum_{n=0}^{\infty}\Phi_n(\qu)\overline{\Phi_n(\pu)}=e_{\star}^{\qu\overline{\pu}}$$
 see \cite{Thi1, Al} for details.

\subsection{Coherent states on right quaternionic Hilbert spaces}\label{CSLQH}
The main content of this section is extracted from \cite{Thi2} as needed here. For an enhanced explanation we refer the reader to \cite{Thi2}. In \cite{Thi2} the authors have defined coherent states on $V_{\quat}^{R}$ and $V_{\quat}^{L}$, and also established the normalization and resolution of the identities for each of them.\\

On the Bargmann space $\HI^B_r$, the normalized canonical coherent states are
\begin{equation}\label{CCS}
\eta_{\qu}=\frac{1}{\sqrt{K_B(\qu, \oqu)}}\sum_{n=0}^{\infty}\Phi_n\Phi_n(\oqu)=e^{-\frac{|\qu|^2}{2}}\sum_{n=0}^{\infty}\Phi_n\frac{\qu^n}{n!}
=e^{-\frac{|\qu|^2}{2}}\sum_{n=0}^{\infty}\frac{\qu^n}{n!}\cdot\Phi_n,
\end{equation}
where we have used the fact in Proposition \ref{lft_mul} (f), with a resolution of the identity
\begin{equation}\label{resolution}
\int_{\quat}|\eta_{\qu}\rangle\langle\eta_{\qu}|d\zeta(r, \theta, \phi, \psi)=I_{\HI^B_r}.
\end{equation}
Now take the corresponding annihilation and creation operators as
\begin{eqnarray*}
\as\Phi_0=0,\quad
\as\Phi_n=\sqrt{n}\Phi_{n-1},\quad
\asd\Phi_n=\sqrt{n+1}\Phi_{n+1}.
\end{eqnarray*}
The operators can be taken as $\asd=\qu$ (multiplication by $\qu$) and $\as=\partial_s$ (left slice regular derivative), see \cite{Thi1, MuTh}. It is also not difficult to see that $(\asd)^{\dagger}=\as$, $[\as,\asd]=I_{\HI^B_r}$ and $\as\eta_{\qu}=\qu\cdot\eta_{\qu}$ (see also \cite{MTI}). In the same way canonical CS can also be defined on a left quaternion Hilbert space \cite{Thi2}.\\
In the following we shall briefly see the Heisenberg uncertainty relation. The material are extracted from \cite{MuTh, MTI} and for an enhanced explanation we refer the reader to \cite{MuTh, MTI}. Through the coherent state quantization process the annihilation and creation operators can also be written as
$$\as=\sum_{n=0}^{\infty}\sqrt{n+1}|\Phi_n\rangle\langle\Phi_{n+1}|\quad\text{and}\quad
\asd=\sum_{n=0}^{\infty}\sqrt{n+1}|\Phi_{n+1}\rangle\langle\Phi_{n}|.$$
Let $N=\asd\as$ the number operator. For $\qu\in\quat$ let the position and the momentum coordinates as
$$q=\frac{1}{\sqrt{2}}(\qu+\oqu)\quad\text{and}\quad p=-\frac{i}{\sqrt{2}}(\qu-\oqu).$$
\begin{remark}Through linearity in the quantization if we take the momentum operator as $\displaystyle P= -\frac{i}{\sqrt{2}}(\as-\asd)$ then $P$ is not self-adjoint \cite{MuTh}. However, if we consider the left multiplication of operators then $P$ becomes a linear self-adjoint operator.
\end{remark}
\begin{proposition}\cite{MTI} The operators $\displaystyle Q=\frac{1}{\sqrt{2}}(\as+\asd)$ and $\displaystyle P= -\frac{i}{\sqrt{2}}\cdot(\as-\asd)$ are linear and self-adjoint. Further the $i$ in $P$ can be replaced by $j, k$ or any other $I\in\mathbb{S}$.
\end{proposition}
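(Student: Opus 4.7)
The plan is to dispatch the claims for $Q$ quickly, then concentrate on $P$, where the subtlety lives. Linearity of $Q$ follows because $\as$ and $\asd$ are right $\quat$-linear on the natural dense domain of finite linear combinations of $\{\Phi_n\}$, and real linear combinations of right $\quat$-linear operators are again right $\quat$-linear. For self-adjointness, the relations $(\asd)^\dagger=\as$ and $\as^\dagger=\asd$ recalled in the excerpt give $Q^\dagger=\tfrac{1}{\sqrt 2}((\asd)^\dagger+\as^\dagger)=Q$ immediately.

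For $P$, I would first compute the adjoint using identity (\ref{sc_mul_aj-op}):
\[
P^\dagger = (\as-\asd)^\dagger \cdot \overline{(-i/\sqrt 2)} = (\asd-\as)\cdot \frac{i}{\sqrt 2}.
\]
The pivotal step is a small lemma asserting that left multiplication by any $I\in\mathbb S$ commutes with $\as$ and $\asd$, i.e.\ $I\cdot\as=\as\cdot I$ and $I\cdot\asd=\asd\cdot I$. I would verify this on the basis $\{\Phi_n\}$: using Proposition \ref{lft_mul}(f), the right $\quat$-linearity of $\as$, and the fact that $\as\Phi_n=\sqrt n\,\Phi_{n-1}$ has a real coefficient, one has
\begin{align*}
(\as\cdot I)\Phi_n &= \as(I\cdot\Phi_n) = \as(\Phi_n I) = (\as\Phi_n)I \\
&= \sqrt n\,\Phi_{n-1} I = \sqrt n\,(I\cdot\Phi_{n-1}) = I\cdot(\as\Phi_n) = (I\cdot \as)\Phi_n,
\end{align*}
and the same computation works for $\asd$. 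Once the two operators agree on the basis, Remark \ref{RR1} extends the equality to the full dense domain by linearity and continuity.

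With this commutativity in hand, the computation closes:
\[
P^\dagger = \tfrac{i}{\sqrt 2}\cdot(\asd-\as) = -\tfrac{i}{\sqrt 2}\cdot(\as-\asd) = P.
\]
The replacement of $i$ by an arbitrary $I\in\mathbb S$ is essentially free, because every element of the unit sphere satisfies $\overline I=-I$ and the commutativity lemma above was stated for arbitrary $I$. The main obstacle, and the only genuinely quaternionic point of the argument, is exactly this commutativity statement: it is the mechanism that rescues the left-multiplication definition of $P$, while the naive right-multiplication version recalled in the preceding Remark remains non-self-adjoint. Strictly speaking one should also verify that the natural dense domain is preserved under the left-multiplication operations, but this is immediate from Proposition \ref{lft_mul}(a) together with the invariance of the finite span of $\{\Phi_n\}$ under $\as$ and $\asd$.
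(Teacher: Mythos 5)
Your argument is correct, and it isolates exactly the right mechanism. The paper itself gives no proof of this proposition (it is cited from [MTI]), so there is nothing internal to compare against line by line; but the one nontrivial ingredient you identify and prove as a ``small lemma'' --- that $I\cdot\as=\as\cdot I$ and $I\cdot\asd=\asd\cdot I$ for any quaternion --- is precisely what the paper later records as Proposition \ref{xAq} (also deferred to [MTI]). Your verification of it on the basis, using Proposition \ref{lft_mul}(f), the right linearity of $\as$, and the reality of the coefficients $\sqrt{n}$, together with the extension to the dense domain via Remark \ref{RR1}, is the natural and correct derivation, and it makes the proof self-contained where the paper is not. The remaining steps --- $Q^\dagger=Q$ from $(\asd)^\dagger=\as$, and $P^\dagger=(\as-\asd)^\dagger\cdot\overline{(-i/\sqrt2)}=(\asd-\as)\cdot(i/\sqrt2)=P$ via the adjoint rule (\ref{sc_mul_aj-op}) and the commutation lemma --- are exactly as they should be, and your observation that only $\overline{I}=-I$ is used makes the generalization to arbitrary $I\in\mathbb{S}$ immediate. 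The only thing I would add explicitly is that the right $\quat$-linearity of $P$ itself follows from the paper's remark that $\qu\cdot A$ is right linear whenever $A$ is; you state linearity only for $Q$. Your closing caveat about domains is well taken but is glossed over by the paper as well, so it does not constitute a gap relative to the intended level of rigor.
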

The operators $Q$ and $P$ are the quaternionic position and momentum operators respectively. The operator
$$H=\frac{Q^2+P^2}{2}=N+\frac{1}{2}I_{\HIB}$$
is the quaternionic analogue of the harmonic oscillator Hamiltonian. Now using the canonical coherent states $\eta_\qu$ we can compute the following expectation values.
\begin{eqnarray*}
\langle\eta_\qu|\as|\eta_\qu\rangle&=&\qu,\quad \langle\eta_\qu|\asd|\eta_\qu\rangle=\oqu,\\
\langle\eta_\qu|\as^2|\eta_\qu\rangle&=&\qu^2,\quad \langle\eta_\qu|(\asd)^2|\eta_\qu\rangle=\oqu^2,\\
\langle\eta_\qu|\as\asd|\eta_\qu\rangle&=&1+|\qu|^2\quad\text{and}\quad \langle\eta_\qu|\asd\as|\eta_\qu\rangle=|\qu|^2.
\end{eqnarray*}
Using these we obtain
\begin{eqnarray*}
\langle\eta_\qu|Q|\eta_\qu\rangle&=&\frac{1}{2}(\qu^2+2|\qu|^2+\oqu^2)\quad\text{and}\\
\langle\eta_\qu|Q^2|\eta_\qu\rangle&=&\frac{1}{2}(\qu^2+1+2|\qu|^2+\oqu^2).
\end{eqnarray*}
Hence we get
$$\langle\Delta Q\rangle^2=\langle\eta_\qu|Q^2|\eta_\qu\rangle-\langle\eta_\qu|Q|\eta_\qu\rangle^2=\frac{1}{2}$$
That is
$$\langle\Delta Q\rangle=\frac{1}{\sqrt{2}}.$$
However, due to the noncommutativity of quaternions there is a technical difficulty in computing $\langle\Delta P\rangle$. As we can see
$$\langle\eta_\qu|i\cdot\as|\eta_\qu\rangle=\left(e^{-|\qu|^2}\sum_{n=0}^{\infty}\frac{\oqu^n i\qu^n}{n!}\right)\qu=\Ci\qu\quad\text{(say)},$$
where $\displaystyle\Ci=e^{-|\qu|^2}\sum_{n=0}^{\infty}\frac{\oqu^n i\qu^n}{n!}$ cannot be computed explicitly and there is no known technique to overcome this difficulty. However, this series absolutely converges to $1$.
That is $|\Ci |\leq 1$.
Since $\overline{\Ci}=-\Ci$ and $|\Ci|^2=-\Ci^2$
we can take $\Ci=rI$ for some $r\in [0,1]$ and $I\in\mathbb{S}$. With this $\Ci$ we compute the following.
\begin{eqnarray*}
\langle\eta_\qu|P|\eta_\qu\rangle&=&\frac{1}{\sqrt{2}}(\Ci\qu-\Ci\oqu)\\
\langle\eta_\qu|P^2|\eta_\qu\rangle&=&-\frac{1}{2}(\qu^2-1-2|\qu|^2+\oqu^2)
\end{eqnarray*}
Hence
$$\langle\Delta P\rangle^2=-\frac{1}{2}(\qu^2-1-2|\qu|^2+\oqu^2)
-\frac{1}{2}[(\Ci\qu)^2+2|\Ci\qu|^2+\overline{(\Ci\qu)}^2].$$
Therefore, as $|\Ci|\leq 1$, we get
$$|\langle\Delta Q\rangle^2\langle\Delta P\rangle^2|\geq \frac{1}{4}-|\qu|^2.$$
Similarly
$$|\langle\Delta Q\rangle^2\langle\Delta P\rangle^2|\leq \frac{1}{4}+|\qu|^2.$$
That is,
$$|\langle\Delta Q\rangle^2\langle\Delta P\rangle^2-\frac{1}{4}|\leq |\qu|^2$$
and therefore
$$\lim_{|\qu|\longrightarrow 0}|\langle\Delta Q\rangle^2\langle\Delta P\rangle^2|=\frac{1}{2}.$$
Further
$$\frac{1}{2}|\langle[Q, P]\rangle|=\frac{1}{2}r\leq\frac{1}{2}.$$
Therefore
$$\lim_{|\qu|\longrightarrow 0}|\langle\Delta Q\rangle^2\langle\Delta P\rangle^2|\geq \frac{1}{2}|\langle[Q, P]\rangle|.$$
The Heisenberg uncertainty get saturated only in a limit sense, that is in a neighbourhood of zero. We believe this is not due to the way the momentum operator is defined but it is due to the fact that the series $\Ci$ could not be computed explicitly.\\

The following Proposition demonstrate commutativity between quaternions and the right linear operators $\as$ and $\asd$. Further, it plays an important role. This fact is true only for these specific operators and it is not true for general quaternionic linear operators.
\begin{proposition}\label{xAq}\cite{MTI}
	For each $\mathfrak{q}\in\quat$, we have $\mathfrak{q}\cdot {\mathsf{a}} ={\mathsf{a}} \cdot\mathfrak{q}$ and $\mathfrak{q}\cdot {{\mathsf{a}}^\dagger} ={{\mathsf{a}}^\dagger} \cdot\mathfrak{q}$.
\end{proposition}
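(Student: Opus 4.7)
The plan is to reduce the identity to the basis $\{\Phi_n\}_{n\geq 0}$, since this is precisely the orthonormal basis used to define the left scalar multiplication on $\HIB$; once the identity is checked on basis vectors, it extends to the common domain by right $\quat$-linearity and by Remark 3.3 (left multiplication commutes with convergent series in $V_\quat^R$).

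First I would fix $\mathfrak{q}\in\quat$ and, using Proposition \ref{lft_mul}(f), record the key identity $\mathfrak{q}\cdot\Phi_n=\Phi_n\mathfrak{q}$ for every $n$. Then I would evaluate both sides of the first claim on $\Phi_n$. On the one hand,
$$\mathfrak{q}\cdot(\as\Phi_n)=\mathfrak{q}\cdot(\sqrt{n}\,\Phi_{n-1})=\sqrt{n}\,(\mathfrak{q}\cdot\Phi_{n-1})=\sqrt{n}\,\Phi_{n-1}\mathfrak{q},$$
where the middle equality uses Proposition \ref{lft_mul}(a). On the other hand, right $\quat$-linearity of $\as$ gives
$$\as(\mathfrak{q}\cdot\Phi_n)=\as(\Phi_n\mathfrak{q})=(\as\Phi_n)\mathfrak{q}=\sqrt{n}\,\Phi_{n-1}\mathfrak{q}.$$
The two expressions agree, so $(\mathfrak{q}\cdot\as)\Phi_n=(\as\cdot\mathfrak{q})\Phi_n$ on every basis vector. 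The same calculation, with $\asd\Phi_n=\sqrt{n+1}\,\Phi_{n+1}$, handles the creation operator.

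To extend to all of $\D(\as)$ (respectively $\D(\asd)$), I would expand $\phi=\sum_{n}\Phi_n c_n$ with $c_n=\langle\Phi_n\mid\phi\rangle$. A short computation using (iv) and orthonormality gives $\langle\Phi_n\mid\mathfrak{q}\cdot\phi\rangle=\mathfrak{q}c_n$, hence by Proposition \ref{lft_mul}(b) the Parseval sums characterizing $\D(\as)$ and $\D(\asd)$ are preserved under $\phi\mapsto\mathfrak{q}\cdot\phi$, so the right-multiplication operators $\as\cdot\mathfrak{q}$ and $\asd\cdot\mathfrak{q}$ are defined on the same domains. Invoking Remark \ref{RR1} to pull $\mathfrak{q}\cdot(\cdot)$ and $\as$ (respectively $\asd$) through the convergent series term-by-term reduces the general identity to the basis computation of the previous paragraph.

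The one genuinely non-routine point is the domain preservation in the unbounded setting; once this is in hand, the proposition is essentially the statement that the single-variable left multiplication (which acts as right quaternion multiplication on the chosen basis, by (f)) commutes with any right $\quat$-linear operator whose matrix in that basis has real entries, which is the case for both $\as$ and $\asd$. I expect the main obstacle to be purely bookkeeping: being careful that each rearrangement of scalars is justified either by the scalar associativity in (a), the conjugation rule (d), or the series-continuity statement in Remark \ref{RR1}, since these are the only tools available in the noncommutative setting.
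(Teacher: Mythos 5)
The paper itself gives no proof of this proposition; it is quoted verbatim from the reference [MTI], so there is no in-text argument to compare yours against. Judged on its own terms, your proof is correct and is the natural one given the paper's toolkit: the whole content of the statement is that the left multiplication induced by the basis $\{\Phi_n\}$ acts as right multiplication on basis vectors (Proposition \ref{lft_mul}(f)), while $\as$ and $\asd$ have \emph{real} matrix elements $\sqrt{n}$, $\sqrt{n+1}$ in that same basis, so the two operations commute coefficient-by-coefficient. Your verification on basis vectors is exactly right, your observation that $\langle\Phi_n\mid\mathfrak{q}\cdot\phi\rangle=\mathfrak{q}\langle\Phi_n\mid\phi\rangle$ together with $|\mathfrak{q}c_n|=|\mathfrak{q}||c_n|$ settles domain preservation, and Remark \ref{RR1} plus closedness of $\as,\asd$ justifies the term-by-term passage to general $\phi$. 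If anything, you are more careful about domains than the paper (which never discusses $\D(\as)$ explicitly); the only point worth flagging is that pulling $\as$ through the series $\sum_n\Phi_n\mathfrak{q}c_n$ needs closedness of $\as$ (or convergence of partial sums in graph norm), which you should state rather than leave implicit under ``bookkeeping.''
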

\subsection{The right quaternionic displacement operator}
On a right quaternionic Hilbert space with a right multiplication we cannot have a displacement operator as a representation for the representation space $\HI^B_r$ \cite{Ad2, Thi2}. However, in \cite{MTI}, we have shown that if we consider a right quaternionic Hilbert space with a left multiplication on it, we can have a displacement operator as a representation for the representation space $\HI^B_r$ with all the desired properties. We shall extract some materials from \cite{MTI} as needed here.\\

\begin{proposition}\label{dis}\cite{MTI} The right quaternionic displacement operator $\D(\qu)=e^{\qu\cdot\asd-\oqu\cdot\as}$ is a unitary, square integrable and irreducible representation of the representation space $\HI^B_r$.
\end{proposition}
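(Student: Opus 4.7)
The plan is to reduce the three properties to the action of $\D(\qu)$ on the vacuum $\Phi_0$ via a Baker--Campbell--Hausdorff (BCH) disentangling, which is enabled by Proposition~\ref{xAq}; once this identifies $\D(\qu)\Phi_0$ with the coherent state $\eta_\qu$ of (\ref{CCS}), both square integrability and irreducibility will follow from the machinery of Section~\ref{CSLQH}. First, for \emph{unitarity}, the generator $X(\qu):=\qu\cdot\asd-\oqu\cdot\as$ is skew-adjoint: by Proposition~\ref{xAq} we may rewrite $\qu\cdot\asd=\asd\cdot\qu$, so identity (\ref{sc_mul_aj-op}) combined with $(\asd)^\dagger=\as$ gives $(\qu\cdot\asd)^\dagger=\oqu\cdot\as$ and symmetrically; hence $X(\qu)^\dagger=-X(\qu)$ and $\D(\qu)=e^{X(\qu)}$ is unitary.

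Next, for the disentangling, one uses Proposition~\ref{xAq} and property~(c) of Proposition~\ref{lft_mul}, together with $\qu\oqu=\oqu\qu=|\qu|^2\in\R$, to collapse the commutator to a real scalar:
\[
 [\qu\cdot\asd,\oqu\cdot\as]=|\qu|^2\cdot[\asd,\as]=-|\qu|^2\,I_{\HIB}.
\]
Since this commutator is central, the classical BCH proof is purely algebraic and carries over verbatim, giving
\[
 \D(\qu)=e^{-|\qu|^2/2}\,e^{\qu\cdot\asd}\,e^{-\oqu\cdot\as}.
\]
Applied to $\Phi_0$, the rightmost factor acts as the identity because $\as\Phi_0=0$, and a short induction using properties (a), (f) of Proposition~\ref{lft_mul} and the right-linearity of $\asd$ gives $(\qu\cdot\asd)^n\Phi_0=\sqrt{n!}\,\Phi_n\qu^n$; comparing with (\ref{CCS}) yields $\D(\qu)\Phi_0=\eta_\qu$. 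Square integrability is then immediate from the resolution of identity (\ref{resolution}), which now reads $\int_\quat \D(\qu)\,|\Phi_0\rangle\langle\Phi_0|\,\D(\qu)^\dagger\,d\zeta=I_{\HIB}$, exhibiting $\Phi_0$ as an admissible fiducial vector and showing that every matrix coefficient $\qu\mapsto\langle\phi\mid\D(\qu)\psi\rangle$ lies in $L^2_\quat(\quat,d\zeta)$.

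The remaining claim, \emph{irreducibility}, is the main technical obstacle. A Schur-type argument should apply: if $P$ is the orthogonal projector onto a closed $\D(\quat)$-invariant subspace $M$, then $P$ commutes with every $\D(\qu)$; applying both sides to $\Phi_0$ and using the totality of the coherent states $\{\eta_\qu\}$ guaranteed by (\ref{resolution}) forces $P\Phi_0\in\{0,\Phi_0\}$, whence $M\in\{\{0\},\HIB\}$. In the complex setting this would follow cleanly from Stone--von Neumann, but here one must verify the Schur dichotomy for right-quaternion-linear intertwiners, and this depends essentially on Proposition~\ref{xAq} so that no nontrivial quaternionic left multiplication can slip into $P$ and defeat commutativity with the $\D(\qu)$---the same noncommutativity subtlety that obstructs the right-multiplication displacement operator from being a representation at all.
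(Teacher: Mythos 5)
A preliminary remark: the paper itself offers no proof of this proposition; it is imported verbatim from \cite{MTI}. So the comparison below is against the standard argument that reference (and Section \ref{CSLQH} of this paper) relies on.

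Your handling of unitarity, the disentangling, and square integrability is essentially the intended route and is sound: skew-adjointness of $\qu\cdot\asd-\oqu\cdot\as$ via (\ref{sc_mul_aj-op}) and Proposition \ref{xAq}, the central commutator $[\qu\cdot\asd,\oqu\cdot\as]=-|\qu|^2 I_{\HIB}$, the identification $\D(\qu)\Phi_0=\eta_\qu$, and admissibility of $\Phi_0$ read off from (\ref{resolution}). (Two minor caveats: since $\as,\asd$ are unbounded, passing from ``anti-hermitian generator'' to ``unitary exponential'' requires essential skew-adjointness, e.g.\ by Nelson's analytic vector theorem with the $\Phi_n$ as analytic vectors --- the paper is equally silent on this; and your $\D(\qu)\Phi_0$ matches the normalization $e^{-|\qu|^2/2}\sum_n\Phi_n\qu^n/\sqrt{n!}$ used later in the paper rather than the misprinted $n!$ in (\ref{CCS}).)

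The genuine gap is irreducibility, and it sits exactly where you flag it. From $P\D(\qu)=\D(\qu)P$ you get $P\eta_\qu=\D(\qu)P\Phi_0$, and totality of $\{\eta_\qu\}$ tells you only that $P$ is \emph{determined} by $P\Phi_0$; nothing in that statement forces the dichotomy $P\Phi_0\in\{0,\Phi_0\}$, which is the entire content of irreducibility. The missing step is to convert commutation with all $\D(\qu)$ into commutation with the generators: taking $\qu=t,ti,tj,tk$ and differentiating at $t=0$ shows $P$ commutes with $\asd-\as$ and with $I\cdot(\asd+\as)$ for $I=i,j,k$; squaring the latter (using Propositions \ref{xAq} and \ref{lft_mul}(c), since $I^2=-1$) shows $P$ commutes with $(\asd+\as)^2$ and $(\asd-\as)^2$, hence with $N=\asd\as$. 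Each eigenspace $\Phi_n\quat$ of $N$ is one-dimensional on the right, so $P\Phi_n=\Phi_n\lambda_n$ with $\lambda_n\in\quat$; commuting with $\asd-\as$ and comparing coefficients forces all $\lambda_n$ equal to one $\lambda$, and $P=P^\dagger=P^2$ forces $\lambda=\overline{\lambda}$ and $\lambda^2=\lambda$, i.e.\ $\lambda\in\{0,1\}$. This last computation is also what answers your own worry about quaternionic left multiplications ``slipping into'' $P$: they are excluded because the commutant element $\lambda$ is shown to be real, not assumed to be. As written, your argument asserts the Schur dichotomy rather than proving it.
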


	Furthermore, the coherent state $\eta_{\qu}$ is generated from the ground state $\Phi_0$ by the displacement operator $\D(\qu)$,
	\begin{equation}\label{coh_dis}
	\eta_{\qu}=\D(\qu)\Phi_0.
	\end{equation}

\begin{proposition}\label{dis}\cite{MTI} The displacement operator $\D(\qu)$ satisfies the following properties
$$(i)~~\D(\qu)^\dagger\as\D(\qu)=\as+\qu\quad (ii)~~\D(\qu)^\dagger\asd\D(\qu)=\asd+\oqu.$$
\end{proposition}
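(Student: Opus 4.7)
The strategy is to mimic the complex-case Hadamard/Baker--Campbell--Hausdorff derivation of the shifted creation/annihilation operators, using Proposition \ref{xAq} to treat left multiplications by $\qu$ and $\oqu$ as if they were ordinary commuting scalars when moved past $\as$ and $\asd$. Set $A:=\qu\cdot\asd-\oqu\cdot\as$, so that $\D(\qu)=e^{A}$. From the adjoint rules \eqref{sc_mul_aj-op} together with $\as^\dagger=\asd$, one gets $A^\dagger=\overline{\qu}\cdot\as-\qu\cdot\asd=-A$, reconfirming $\D(\qu)^\dagger=e^{-A}$ and the unitarity already stated.

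To prove (i) I would introduce the one-parameter family $F(t):=e^{-tA}\,\as\,e^{tA}$, for which $F(0)=\as$ and $F'(t)=e^{-tA}[\as,A]e^{tA}$. The key computation is the inner commutator: by Proposition \ref{xAq}, left multiplication by $\qu$ and by $\oqu$ commutes with $\as$, so together with $[\as,\asd]=I_{\HI^B_r}$ we obtain
\[
[\as,\qu\cdot\asd]=\qu\cdot[\as,\asd]=\qu\cdot I_{\HI^B_r},\qquad [\as,\oqu\cdot\as]=\oqu\cdot[\as,\as]=0,
\]
hence $[\as,A]=\qu\cdot I_{\HI^B_r}$. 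A short check based on Proposition \ref{lft_mul}(c) and the equality $\qu\oqu=\oqu\qu=|\qu|^2$ shows that $\qu\cdot I_{\HI^B_r}$ commutes with both $\qu\cdot\asd$ and $\oqu\cdot\as$, hence with $A$ and therefore with $e^{tA}$. Consequently $F'(t)=\qu\cdot I_{\HI^B_r}$ is constant in $t$, so integrating from $0$ to $1$ yields $\D(\qu)^\dagger\as\D(\qu)=\as+\qu\cdot I_{\HI^B_r}$, which under the usual identification of $\qu$ with its left-multiplication operator is precisely (i). Part (ii) follows from the parallel calculation $[\asd,A]=-\oqu\cdot[\asd,\as]=\oqu\cdot I_{\HI^B_r}$, giving $\D(\qu)^\dagger\asd\D(\qu)=\asd+\oqu\cdot I_{\HI^B_r}$.

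The main obstacle, and the one point where noncommutativity could derail the argument, is the truncation of the Hadamard series after the first bracket: it depends on $\qu\cdot I_{\HI^B_r}$ commuting with $e^{tA}$, which in turn works only because $A$ involves two quaternions $\qu$ and $\oqu$ that lie in the same commutative slice $\C_{I_\qu}$. If the exponent mixed quaternions from distinct slices the nested brackets would not vanish and no such closed-form shift would emerge; this is the same slice-wise phenomenon highlighted in the introduction, and it is precisely what permits the displacement operator to reproduce its complex counterpart on all of $\quat$.
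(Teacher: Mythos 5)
Your proof is correct. Note that the paper itself states this proposition without proof, citing \cite{MTI}; your argument is the standard Hadamard/BCH computation that such a reference would contain, and the two quaternionic ingredients you rely on are exactly the right ones: Proposition \ref{xAq} to pull the left multiplications $\qu\cdot(\,\cdot\,)$ and $\oqu\cdot(\,\cdot\,)$ through $\as$ and $\asd$ so that $[\as,A]=\qu\cdot I_{\HI^B_r}$, and the identity $\qu\oqu=\oqu\qu=|\qu|^2$ to see that this first commutator already commutes with $A$, so the series truncates. Your closing remark correctly locates the reason the full displacement operator works on all of $\quat$ while mixed displacement--squeeze conjugations do not.
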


\subsection{The right quaternionic squeeze operator}
Same reason as for the displacement operator, with a right multiplication on a right quaternionic Hilbert space the squeezed operator cannot be unitary. However, it becomes unitary with a left multiplication on a right quaternionic Hilbert space.
\begin{lemma}\cite{TM}
The operator $A=\pu\cdot (\asd)^2-\overline{\pu}\cdot\as^2$ is anti-hermitian.
\end{lemma}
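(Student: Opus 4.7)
The plan is to compute $A^\dagger$ directly and show that it equals $-A$, using only two ingredients that are already available in the paper: the adjoint rule \eqref{sc_mul_aj-op} for left/right scalar multiples of operators, and Proposition \ref{xAq}, which asserts that quaternions commute with the annihilation and creation operators (hence with their squares).

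First I would apply the adjoint to each summand of $A = \pu\cdot(\asd)^2 - \overline{\pu}\cdot\as^2$. Since taking adjoints reverses addition trivially, it suffices to treat $\pu\cdot(\asd)^2$ and $\overline{\pu}\cdot\as^2$ separately. By \eqref{sc_mul_aj-op}, $(\qu\cdot B)^\dagger = B^\dagger\cdot\overline{\qu}$ for any densely defined right $\quat$-linear operator $B$ with $\qu\cdot\phi\in\D(B)$ for every $\phi\in\D(B)$. Applied with $B=(\asd)^2$ and $\qu=\pu$, and noting $((\asd)^2)^\dagger=\as^2$ since $(\asd)^\dagger=\as$, this gives
\[
\bigl(\pu\cdot(\asd)^2\bigr)^\dagger = \as^2\cdot\overline{\pu}.
\]
Similarly, with $B=\as^2$ and $\qu=\overline{\pu}$ and $\overline{\overline{\pu}}=\pu$, one obtains
\[
\bigl(\overline{\pu}\cdot\as^2\bigr)^\dagger = (\asd)^2\cdot\pu.
\]

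Next I would invoke Proposition \ref{xAq} to move the scalars through the operators. That proposition states $\qu\cdot\as=\as\cdot\qu$ and $\qu\cdot\asd=\asd\cdot\qu$ for every $\qu\in\quat$; iterating, one also has $\qu\cdot\as^2=\as^2\cdot\qu$ and $\qu\cdot(\asd)^2=(\asd)^2\cdot\qu$. Applying this to the two expressions above,
\[
\as^2\cdot\overline{\pu} = \overline{\pu}\cdot\as^2, \qquad (\asd)^2\cdot\pu = \pu\cdot(\asd)^2.
\]
Putting everything together,
\[
A^\dagger = \overline{\pu}\cdot\as^2 - \pu\cdot(\asd)^2 = -\bigl(\pu\cdot(\asd)^2 - \overline{\pu}\cdot\as^2\bigr) = -A,
\]
which is the anti-hermiticity claim.

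There is no real obstacle here; the only subtlety to be careful about is remembering that $(\qu\cdot B)^\dagger$ carries the conjugate to the \emph{right} of $B^\dagger$ (as a right scalar multiplication), so that without Proposition \ref{xAq} one could not naively re-gather the scalars on the left to match $-A$. The noncommutativity-related issue that bedevils more general expressions (like those involving the Bose factor $\Ci$ discussed above) does not arise precisely because $\pu$ and $\overline{\pu}$ can be freely commuted past $\as^2$ and $(\asd)^2$.
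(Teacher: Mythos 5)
Your proof is correct, and since the paper itself states this lemma without proof (deferring entirely to \cite{TM}), your argument supplies exactly the natural derivation from the tools the paper makes available: the adjoint rule \eqref{sc_mul_aj-op} together with the commutation of scalars past $\as$ and $\asd$ from Proposition \ref{xAq}, which is needed precisely to re-gather $\as^2\cdot\overline{\pu}$ and $(\asd)^2\cdot\pu$ as left multiples matching $-A$. The only points glossed over are routine at the paper's level of rigor, namely the domain hypothesis in \eqref{sc_mul_aj-op} and the identity $\bigl((\asd)^2\bigr)^\dagger=\as^2$, both of which hold here because left multiplication preserves the span of the basis vectors $\Phi_n$.
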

Therefore by the Baker-Campbell-Hausdorff formula,
$$e^{A}e^{B}e^{-\frac{1}{2}[A, B]}=e^{A+B}$$
we have, for the operator
$$S(\pu)=e^{\frac{1}{2}(\pu\cdot (\asd)^2-\overline{\pu}\cdot\as^2)},$$
$$S(\pu)S(\pu)^\dagger=I_{\HI^B_r}.$$
That is, the operator $S(\pu)$ is unitary and we call this operator the {\em quaternionic squeeze operator}. Further
$$S(\pu)^\dagger=e^{-\frac{1}{2}A}=S(-\pu).$$
If we take
$$K_+=\frac{1}{2}(\asd)^2,\quad K_{-}=\frac{1}{2}\as^2,\quad\text{and}\quad K_0=\frac{1}{2}(\asd\as+\frac{1}{2}I_{\HI^B_r}),$$
Then they satisfy the commutation rules
$$[K_0, K_+]=K_+,\quad [K_0, K_{-}]=-K_{-},\quad\text{and}\quad [K_+,K_{-}]=-2K_0.$$
That is, $K_+, K_{-}$ and $K_0$ are the generators of the $su(1,1)$ algebra and they satisfy the $su(1,1)$ commutation rules. In terms of these operators the squeeze operator $S(\pu)$ can be written as
\begin{equation}\label{sq1}
S(\pu)=e^{\pu\cdot K_+-\opu\cdot K_{-}}.
\end{equation}
The following proposition is the key to compute expectation values and variances of operators. It can be proved using some quaternionic Lie algebraic structures \cite{TM}.
\begin{proposition}\label{Psqop}\cite{TM}
Let $\displaystyle\pu=|\pu|e^{i\theta\sigma(\hat{n})}$ and $N=\asd\as$, the number operator, then the squeeze operator $S(\pu)$ satisfies the following relations
\begin{eqnarray*}
(i)~S(\pu)^{\dagger}\as S(\pu)&=&(\cosh{|\pu|})\as+\left(e^{i\theta\sigma(\hat{n})}\sinh{|\pu|}\right)\cdot\asd.\\
(ii)~S(\pu)^{\dagger}\asd S(\pu)&=&(\cosh{|\pu|})\asd+\left(e^{-i\theta\sigma(\hat{n})}\sinh{|\pu|}\right)\cdot\as.\\
(iii)~S(\pu)^{\dagger}N S(\pu)&=&(\cosh^2{|\pu|})\asd\as+\left(e^{-i\theta\sigma(\hat{n})}\sinh{|\pu|}\cosh{|\pu|}\right)\cdot\as^2\\
&+&\left(e^{i\theta\sigma(\hat{n})}\sinh{|\pu|}\cosh{|\pu|}\right)\cdot(\asd)^2+\sinh^2{|\pu|}\as\asd.
\end{eqnarray*}
\end{proposition}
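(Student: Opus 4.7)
The plan is to mimic the classical Bogoliubov calculation for the complex squeeze operator, using Proposition \ref{xAq} as the engine that lets every left-multiplied scalar commute past $\as$ and $\asd$. Set $B=\pu\cdot K_+-\opu\cdot K_-=\tfrac{1}{2}(\pu\cdot(\asd)^2-\opu\cdot\as^2)$, so that $S(\pu)=e^{B}$ and $S(\pu)^\dagger=e^{-B}$, and introduce the operator-valued one-parameter families
$$f(t)=e^{-tB}\,\as\,e^{tB},\qquad g(t)=e^{-tB}\,\asd\,e^{tB},\qquad t\in\R,$$
so that (i) and (ii) will follow by evaluating $f(1)$ and $g(1)$.

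First I would compute the commutators $[\as,B]$ and $[\asd,B]$. Using the classical identities $[\as,(\asd)^2]=2\asd$ and $[\asd,\as^2]=-2\as$, together with Proposition \ref{xAq} to pull the left factors $\pu\cdot$ and $\opu\cdot$ outside the commutators, one obtains $[\as,B]=\pu\cdot\asd$ and $[\asd,B]=\opu\cdot\as$. Differentiating $f$ and $g$ under the exponential and using that left multiplication by any fixed quaternion commutes with $e^{\pm tB}$ (which reduces, via Proposition \ref{xAq} and property (c) of Proposition \ref{lft_mul}, to the observation that the only scalar combination arising when a left factor collides with $B$ is the real one $\pu\opu=|\pu|^2$), I obtain the coupled linear system
$$f'(t)=\pu\cdot g(t),\qquad g'(t)=\opu\cdot f(t),\qquad f(0)=\as,\ g(0)=\asd.$$
Eliminating $g$ yields $f''(t)=|\pu|^{2}f(t)$, a second-order linear ODE with real coefficient; the unique operator-valued solution matching $f'(0)=\pu\cdot\asd$ is $f(t)=\cosh(|\pu|t)\,\as+\tfrac{\pu}{|\pu|}\sinh(|\pu|t)\cdot\asd$. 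Substituting the polar form $\pu/|\pu|=e^{i\theta\sigma(\hat n)}$ and evaluating at $t=1$ proves (i); the symmetric argument using $\opu/|\pu|=e^{-i\theta\sigma(\hat n)}$ proves (ii).

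For (iii) I would write $S(\pu)^\dagger N S(\pu)=S(\pu)^\dagger\asd\, S(\pu)\cdot S(\pu)^\dagger\as\,S(\pu)=g(1)f(1)$ and expand the product of the two expressions just obtained. The four resulting monomials simplify via Proposition \ref{xAq}: in three of them a single left factor $e^{\pm i\theta\sigma(\hat n)}\cdot$ slides out to the front, producing respectively $\cosh^2|\pu|\,\asd\as$, $\sinh|\pu|\cosh|\pu|\,e^{i\theta\sigma(\hat n)}\cdot(\asd)^2$ and $\sinh|\pu|\cosh|\pu|\,e^{-i\theta\sigma(\hat n)}\cdot\as^2$; in the cross term $\sinh^2|\pu|\,(e^{-i\theta\sigma(\hat n)}\cdot\as)(e^{i\theta\sigma(\hat n)}\cdot\asd)$ the scalars collide as $e^{-i\theta\sigma(\hat n)}e^{i\theta\sigma(\hat n)}=1$ and leave $\sinh^2|\pu|\,\as\asd$, giving (iii) verbatim.

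The only genuinely quaternionic obstacle is the bookkeeping of these left-multiplication moves: for a generic quaternion-linear operator $A$ the identity $\qu'\cdot A=A\cdot\qu'$ simply fails, so at every step where a scalar is pulled past an exponential, a commutator, or a product one must invoke Proposition \ref{xAq}, and one must further verify that the only scalar products actually requiring reordering are real ($\pu\opu=|\pu|^{2}$ and $|e^{i\theta\sigma(\hat n)}|^{2}=1$). Once this bookkeeping is done carefully, the computation runs parallel to the standard complex case.
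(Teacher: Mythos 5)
Your argument is correct and complete. The paper itself gives no proof of Proposition \ref{Psqop}: it only remarks that the result ``can be proved using some quaternionic Lie algebraic structures'' and cites \cite{TM}, where the $su(1,1)$ generators $K_\pm, K_0$ are presumably used to disentangle $S(\pu)$. Your route is the standard adjoint-action one: the commutators $[\as,B]=\pu\cdot\asd$ and $[\asd,B]=\opu\cdot\as$ are right, the closure of the system $f'=\pu\cdot g$, $g'=\opu\cdot f$ into $f''=|\pu|^2 f$ uses exactly the facts that matter ($\pu\opu=|\pu|^2$ is real, and Proposition \ref{xAq} lets left scalars pass through $\as,\asd$), and the expansion of $g(1)f(1)$ for (iii) is handled correctly, including the cancellation $e^{-i\theta\sigma(\hat n)}e^{i\theta\sigma(\hat n)}=1$ in the cross term. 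One phrase deserves tightening: you assert that ``left multiplication by any fixed quaternion commutes with $e^{\pm tB}$,'' which is false in general --- $\qu'\cdot$ commutes with $B$ only when $\qu'$ commutes with both $\pu$ and $\opu$ as quaternions (e.g.\ when $\qu'$ lies in the slice $\C_{I_\pu}$). Fortunately the only left factors you ever move past $e^{\pm tB}$ are $\pu\cdot$ and $\opu\cdot$, for which this holds by Proposition \ref{lft_mul}(c) and $\pu\opu=\opu\pu$, so the proof stands; just restrict the claim to those two scalars.
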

\subsection{Right quaternionic quadrature operators}
We introduce the quadrature operators analogous to the complex quadrature operators with a left multiplication on a right quaternionic Hilbert space.
\begin{equation}\label{Q-op}
X=\frac{1}{2}(\as+\asd)\quad\text{and}\quad Y=-\frac{i}{2}\cdot (\as-\asd),
\end{equation}
where the quaternion unit $i$ in $Y$ can be replaced by $j, k$ or any $I\in\mathbb{S}$ (see \cite{MTI}).
\begin{proposition}\cite{TM}
The operators $X$ and $Y$ are self-adjoint and $[X,Y]=\frac{i}{2}\cdot I_{\HI^B_r}.$
\end{proposition}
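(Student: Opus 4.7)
The plan is to prove the three claims in sequence, leaning heavily on the two facts already established in the excerpt: the adjoint rule $(\qu\cdot A)^{\dagger}=A^{\dagger}\cdot\overline{\qu}$ from (\ref{sc_mul_aj-op}), and the ``scalar/operator commutation'' statement $\qu\cdot \as=\as\cdot\qu$ and $\qu\cdot\asd=\asd\cdot\qu$ from Proposition \ref{xAq}. Self-adjointness of $X$ is almost immediate: since $(\as)^{\dagger}=\asd$, $(\asd)^{\dagger}=\as$, and $\tfrac12\in\mathbb{R}$ commutes with the adjoint operation, one simply writes $X^{\dagger}=\tfrac12(\as^{\dagger}+(\asd)^{\dagger})=\tfrac12(\asd+\as)=X$.

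For $Y$ the scalar is quaternionic, so one must use (\ref{sc_mul_aj-op}) carefully. I would compute
$$Y^{\dagger}=\Bigl(-\tfrac{i}{2}\cdot(\as-\asd)\Bigr)^{\dagger}=(\as-\asd)^{\dagger}\cdot\overline{\bigl(-\tfrac{i}{2}\bigr)}=(\asd-\as)\cdot\tfrac{i}{2},$$
and then invoke Proposition \ref{xAq} to move the scalar $\tfrac{i}{2}$ from the right to the left of $\asd-\as$ (the proposition says right and left $\quat$-multiplication agree on $\as,\asd$). This yields $Y^{\dagger}=\tfrac{i}{2}\cdot(\asd-\as)=-\tfrac{i}{2}\cdot(\as-\asd)=Y$. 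If one wished to replace $i$ by any other $I\in\mathbb{S}$ (as permitted in the definition), the argument would be identical because $\overline{I}=-I$ still holds.

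For the commutator, the key observation is that by Proposition \ref{xAq} the scalar $-\tfrac{i}{2}$ slides freely past any polynomial in $\as,\asd$, so
$$XY=-\tfrac{i}{4}\cdot(\as+\asd)(\as-\asd),\qquad YX=-\tfrac{i}{4}\cdot(\as-\asd)(\as+\asd),$$
without any ordering ambiguity. Expanding the two products and subtracting, the $\as^{2}$ and $(\asd)^{2}$ terms cancel and we are left with
$$[X,Y]=-\tfrac{i}{4}\cdot\bigl(-2\as\asd+2\asd\as\bigr)=\tfrac{i}{2}\cdot[\asd,\as]=-\tfrac{i}{2}\cdot[\as,\asd]\cdot(-1),$$
which, using $[\as,\asd]=I_{\HI^B_r}$ stated earlier, simplifies to $\tfrac{i}{2}\cdot I_{\HI^B_r}$.

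The only genuine subtlety, and hence the step most prone to error, is the self-adjointness of $Y$: the conjugation of the scalar in (\ref{sc_mul_aj-op}) swaps $-i$ with $+i$ and also swaps left- with right-multiplication, so without Proposition \ref{xAq} the identity $Y^{\dagger}=Y$ would not close up. Everything else is a routine manipulation analogous to the complex case.
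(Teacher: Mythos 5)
Your argument is correct and uses exactly the machinery the paper sets up for this purpose (the adjoint rule $(\qu\cdot A)^{\dagger}=A^{\dagger}\cdot\overline{\qu}$ from (\ref{sc_mul_aj-op}) together with Proposition \ref{xAq} to slide the quaternionic scalar past $\as$ and $\asd$); the paper itself states the result without proof, deferring to \cite{TM}, and this is the intended computation. The only blemish is a harmless sign slip in an intermediate step of your commutator chain: $-\tfrac{i}{4}\cdot(-2\as\asd+2\asd\as)=-\tfrac{i}{2}\cdot[\asd,\as]=\tfrac{i}{2}\cdot[\as,\asd]=\tfrac{i}{2}\cdot I_{\HI^B_r}$, which agrees with your stated conclusion.
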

\begin{definition}\cite{Gaz} Let $A$ and $B$ be quantum observables with commutator $[A, B]=i\cdot C$.Then from Cauchy-Schwarz inequality $(\Delta A)(\Delta B)\geq\frac{1}{2}|\langle C\rangle|$. A state will be called squeezed with respect to the pair $(A, B)$ if $(\Delta A)^2$ (or $(\Delta B)^2)<\frac{1}{2}|\langle C\rangle|$.
A state is called ideally squeezed if the equality $(\Delta A)(\Delta B)=\frac{1}{2}|\langle C\rangle|$ is reached together with $(\Delta A)^2$ (or $(\Delta B)^2)<\frac{1}{2}|\langle C\rangle|$.
\end{definition}
We adapt the same definition for quaternionic squeezed states.
\subsection{Right quaternionic pure squeezed states}
A pure squeezed state is produced by the sole action of the unitary operator $S(\pu)$ on the vacuum state. That is, $\eta_\pu=S(\pu)\Phi_0$ are the pure squeezed states. Even through a series form of these states are not necessary to compute the expectation values and variances, we give an expression. Using the BCH formula one can obtain the following. For details see \cite{TM}.
\begin{equation}\label{SCS}
S(\pu)\Phi_0=\eta_\pu=e^{\frac{1}{4}|\pu|^2}\sum_{n=0}^{\infty}e^{n|\pu|^2}\frac{\pu^n\sqrt{(2n)!}}{2^n n!}\cdot\Phi_{2n}.
\end{equation}
Since $S(\pu)$ is a unitary operator, by construction we have
$$\langle\eta_\pu|\eta_\pu\rangle=\langle S(\pu)\Phi_0| S(\pu)\Phi_0\rangle=\langle\Phi_0|\Phi_0\rangle=1.$$
The states $\eta_\pu$ are normalized.
\subsubsection{Expectation values and the variances}
For a normalized state $\eta$ the expectation value of an operator $F$ is $\langle F\rangle=\langle\eta|F|\eta\rangle$. Using Proposition \ref{Psqop} we can obtain the following expectation values . See \cite{TM} for details.
\begin{eqnarray*}
\langle\as\rangle=0,\quad\text{and}\quad \langle\asd\rangle=0.
\end{eqnarray*}
Hence we get
$$\langle X\rangle=\langle\eta_\pu|X|\eta_\pu\rangle=0\quad\text{and}\quad \langle Y\rangle=\langle\eta_\pu|Y|\eta_\pu\rangle=0.$$
Using the same Proposition we readily obtain
\begin{eqnarray*}
\langle\as\asd\rangle&=&\langle\eta_\pu|\as\asd|\eta_\pu\rangle=\cosh^2{|\pu|}\\
\langle\asd\as\rangle&=&\langle\eta_\pu|\asd\as|\eta_\pu\rangle=\sinh^2{|\pu|}\\
\langle\as^2\rangle&=&\langle\eta_\pu|\as^2|\eta_\pu\rangle=\cosh{|\pu|}\sinh{|\pu|}e^{i\theta\sigma(\hat{n})}\\
\langle(\asd)^2\rangle&=&\langle\eta_\pu|(\asd)^2|\eta_\pu\rangle=\cosh{|\pu|}\sinh{|\pu|}e^{-i\theta\sigma(\hat{n})}.
\end{eqnarray*}
Using these expectation values we get
\begin{eqnarray*}
\langle\Delta X\rangle^2\langle\Delta Y\rangle^2
&=&\frac{1}{16}\left\{\mathbb{I}_2+\sinh^2(2|\pu|)\sin^2{(\theta\sigma(\hat{n}))}\right\}
\end{eqnarray*}
This is the quaternionic analogue to the complex case. Since we are in the quaternions, it appears as a $2\times 2$ matrix. Further in the complex case, the product of the variances depends on $r$ and $\theta$ (when $z=re^{i\theta}$). In the quaternion case it depends on all four parameters $r, \theta, \phi$ and $\psi$. Let us write
$$\displaystyle U+iV=e^{-\frac{i}{2}\theta\sigma(\hat{n})}\cdot(X+iY)=e^{-\frac{i}{2}\theta\sigma(\hat{n})}\cdot\as.$$
Then using Proposition \ref{xAq} we can write
\begin{eqnarray*}
S(\pu)^{\dagger}(U+iV)S(\pu)&=&e^{-\frac{i}{2}\theta\sigma(\hat{n})}\cdot S(\pu)^\dagger\as S(\pu)\\
&=&Ue^{|\pu|}+i\cdot Ve^{-|\pu|},
\end{eqnarray*}
with
\begin{eqnarray*}
U&=&\frac{1}{2}(e^{-\frac{i}{2}\theta\sigma(\hat{n})}\cdot\as+e^{\frac{i}{2}\theta\sigma(\hat{n})}\cdot\asd)e^{|\pu|}
\quad\text{and}\\
V&=&\frac{-i}{2}(e^{-\frac{i}{2}\theta\sigma(\hat{n})}\cdot\as-e^{\frac{i}{2}\theta\sigma(\hat{n})}\cdot\asd)e^{|\pu|}.
\end{eqnarray*}
Now it is straight forward that
$\langle\eta_\pu|U|\eta_\pu\rangle=0$, $\langle\eta_\pu|V|\eta_\pu\rangle=0$,
\begin{eqnarray*}
\langle\eta_\pu|U^2|\eta_\pu\rangle
&=&\frac{1}{4}(\cosh{|\pu|}+\sinh{|\pu|})^2\mathbb{I}_2\quad \text{and}\\
\langle\eta_\pu|V^2|\eta_\pu\rangle
&=&\frac{1}{4}(\cosh{|\pu|}-\sinh{|\pu|})^2\mathbb{I}_2.
\end{eqnarray*}
Hence
$$\langle\Delta U\rangle^2 \langle\Delta V\rangle^2=\frac{1}{16}(\cosh^2{|\pu|}-\sinh^2{|\pu|})^2\mathbb{I}_2=\frac{1}{16}\mathbb{I}_2$$
and therefore
\begin{equation}
\langle\Delta U\rangle \langle\Delta V\rangle=\frac{1}{4}\mathbb{I}_2,
\end{equation}
while $\langle\Delta U\rangle \not=\langle\Delta V\rangle$, an exact analogue of the complex case \cite{Gaz}. Hence, the class of ideally squeezed states contains the set of quaternionic pure squeezed states.\\
Using the relation (iii) in Proposition \ref{Psqop} we obtain the mean photon number
$$\langle N\rangle=\langle\eta_\pu| N|\eta_\pu\rangle=\langle\Phi_0|S(\pu)^\dagger NS(\pu)\Phi_0\rangle=\sinh^2{|\pu|}\I_2.$$
and
\begin{eqnarray*}
\langle N^2\rangle&=&\langle\Phi_0|S(\pu)^\dagger N S(\pu)S(\pu)^\dagger N S(\pu)\Phi_0\rangle
=3\sinh^4{|\pu|}+2\sinh^2{|\pu|}\I_2.
\end{eqnarray*}
Hence the variance is
$$\langle \Delta N\rangle^2=\langle N^2\rangle-\langle N\rangle^2=2\sinh^2{|\pu|}(1+\sinh^2{|\pu|})\I_2.$$
The photon number variance is also described by Mandel's Q-parameter. The Mandel parameter is \cite{Gaz, Lo}
\begin{eqnarray*}
Q_M&=&\frac{\langle \Delta N\rangle^2}{\langle N\rangle}-1
=(1+2\sinh^2{|\pu|})\I_2=2\langle N\rangle+\I_2.
\end{eqnarray*}
Since $Q_M>0$ (as a positive definite matrix) the photon number probability distribution is super-Poissonian.
($Q_M=0$ Poissonian and $Q_M<0$ sub-Poissonian).

\subsection{Right quaternionic squeezed states}
According to Prop. \ref{lft_mul}(f), a basis vector satisfies $\qu\cdot\Phi_n=\Phi_n\qu$, therefore we write the canonical CS as
$$\eta_\qu=\D(\qu)\Phi_0=e^{-|\qu|^2/2}\sum_{n=0}^{\infty}\Phi_n\frac{\qu^n}{\sqrt{n!}}.$$
Let $\displaystyle S(\pu)\Phi_n=\Phi_n^{\pu}$, where the set $\{\Phi_n~~|~~n=0,1,2,\cdots\}$ is the basis of the Fock space of regular Bargmann space $\HI^{B}_r$. Since $S(\pu)$ is a unitary operator, the set $\{\Phi_n^{\pu}~~|~~n=0,1,2,\cdots\}$ is also form an orthonormal basis for $\HI^{B}_r$.
Now the squeezed states are
\begin{equation}\label{sqstates}
\eta_{\qu}^{\pu}=S(\pu)\D(\qu)\Phi_0=S(\pu)\eta_{\qu}
=e^{-|\qu|^2/2}\sum_{n=0}^{\infty}\Phi_n^{\pu}\frac{\qu^n}{\sqrt{n!}}.
\end{equation}
Since the canonical CS are normalized, that is $\langle\eta_\qu|\eta_\qu\rangle=1$, and the squeeze operator $S(\pu)$ is unitary, we have
$$\langle\eta_\qu^\pu|\eta_\qu^\pu\rangle=\langle S(\pu)\eta_\qu|S(\pu)\eta_\qu\rangle=\langle\eta_\qu|\eta_\qu\rangle=1.$$
That is, the squeezed states are normalized. The dual vector of $|S(\pu)\eta_\qu\rangle$ is $\langle\eta_\qu S(\pu)^\dagger|$. Therefore, from the resolution of the identity of the canonical CS,
$$\int_\quat |\eta_\qu\rangle\langle\eta_\qu| d\zeta(r,\theta,\phi,\psi)=I_{\HI^B_r}$$
we get
$$\int_\quat |S(\pu)\eta_\qu\rangle\langle\eta_\qu S(\pu)^\dagger| d\zeta(r,\theta,\phi,\psi)=S(\pu)I_{\HI^B_r}S(\pu)^\dagger=I_{\HI^B_r}.$$
That is the squeezed states satisfy the resolution of the identity,
$$\int_\quat |\eta_\qu^\pu\rangle\langle\eta_\qu^\pu| d\zeta(r,\theta,\phi,\psi)=I_{\HI^B_r}.$$

\begin{remark}
Since the operators $\D(\pu)$ and $S(\qu)$ are unitary operators the states $\D(\pu)S(\qu)\Phi_0$ are normalized.
\end{remark}
Since quaternions do not commute the expectation vales cannot be computed. For example, if we combine the Propositions \ref{dis} and \ref{Psqop}, when $\pu=|\pu|e^{i\theta\sigma(\hat{n})}$ let $I_\pu=e^{i\theta\sigma(\hat{n})}$,
\begin{eqnarray*}
\D(\qu)^\dagger S(\pu)^{\dagger}\as S(\pu) \D(\qu) &=&
\D(\qu)^{\dagger} \left[(\cosh{|\pu|})\as+I_\pu\sinh{|\pu|}\cdot\asd\right]\D(\qu)\\
&=&\cosh{|\pu|}\D(\qu)^\dagger\as\D(\qu)+\sinh{|\pu|}\D(\qu)^\dagger I_\pu\cdot\asd\D(\qu).
\end{eqnarray*}
Since $\D(\qu)^\dagger I_\pu\cdot\asd\D(\qu)\not=I_\pu\cdot\D(\qu)^\dagger\asd\D(\qu)$, the above expression cannot be computed. In fact, there is no known technique in quaternion analysis to get a closed form for the expression $\D(\qu)^\dagger I_\pu\cdot\asd\D(\qu)$. However, since elements in a quaternion slice commute, if we consider squeezed states in a quaternion slice then the computations can carry forward.
\section{Squeezed states on a quaternion slice}
Since elements in a quaternion slice $\C_I$ commute we can obtain all the desired results. The states $\D(\qu)S(\pu)\Phi_0$ are called the two photon coherent states \cite{Yuen, Rodney}. The states $\D(\qu)S(\pu)\Phi_0$ are called the squeezed coherent states \cite{Rodney} pp. 207. In the following we briefly see some relations.
\subsection{Two photon coherent states}
Let $\pu,\qu\in\C_I$, then the two photon coherent states are defined as $\eta_{\qu}^\pu=\D(\qu)S(\pu)\Phi_0$ \cite{Yuen}. Let
\begin{eqnarray*}
\pu&=&|\pu|e^{I\theta_{\pu}}=|\pu|I_{\pu}=|\pu|(\cos\theta_\pu+I\sin\theta_\pu)\quad\text{and}\\
\qu&=&|\qu|e^{I\theta_{\qu}}=|\qu|I_{\qu}=|\qu|(\cos\theta_\qu+I\sin\theta_\qu).
\end{eqnarray*}
With these notations we obtain the following.
\begin{proposition} The operators $S(\pu)$ and $\D(\qu)$ satisfies the following relations.
\begin{eqnarray*}
\D(\qu)^\dagger S(\pu)^\dagger\as S(\pu)\D(\qu)&=&\cosh{|\pu|}\as\mathbb{I}_2+I_\pu\sinh{|\pu|}\cdot\asd
+\cosh{|\pu|}\qu\I_2+I_\pu\sinh{|\pu|}\oqu\\
\D(\qu)^\dagger S(\pu)^\dagger\asd S(\pu)\D(\qu)&=&\cosh{|\pu|}\asd\mathbb{I}_2+\overline{I}_\pu\sinh{|\pu|}\cdot\as
+\cosh{|\pu|}\oqu\I_2+\overline{I}_\pu\sinh{|\pu|}\qu,\\
\D(\qu)^{\dagger}S(\pu)^{\dagger}N S(\pu)\D(\qu)&=&\cosh^2{|\pu|}(N+\qu\cdot\asd+\oqu\cdot\as+|\qu|^2)\\
&+&\frac{1}{2}\overline{I}_\pu\sinh{(2|\pu|)}\cdot(\as^2+2\qu\cdot\as+\qu^2)\\
&+&\frac{1}{2}I_\pu\sinh{(2|\pu|)}\cdot((\asd)^2+2\oqu\cdot\asd+\oqu^2)\\
&+&\sinh^2{|\pu|}(\as\asd+\oqu\cdot\as+\qu\cdot\asd+|\qu|^2).
\end{eqnarray*}
\end{proposition}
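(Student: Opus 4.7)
The plan is to reduce each of the three identities to a two-step conjugation: first apply Proposition~\ref{Psqop} to conjugate by $S(\pu)$, then apply Proposition~\ref{dis} (aided by Proposition~\ref{xAq}) to conjugate by $\D(\qu)$. The whole strategy relies on the slice assumption $\pu,\qu\in\C_I$: every left-multiplication scalar appearing after the first step, namely $\cosh|\pu|$, $\sinh|\pu|$, $I_\pu = e^{I\theta_\pu}$ and $\overline{I}_\pu$, lies in $\C_I$, as do $\qu$ and $\oqu$, so all of these scalars commute pairwise as quaternions; by Proposition~\ref{xAq} they also commute with $\as$ and $\asd$ under left multiplication.

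Before applying the two conjugations I would record one auxiliary observation: for every $\alpha\in\C_I$ the operator $\alpha\cdot(-)$ commutes with $\D(\qu)$, and hence with $\D(\qu)^\dagger$. This reduces to $\alpha\cdot A = A\cdot\alpha$ for $A = \qu\cdot\asd - \oqu\cdot\as$, after which Remark~\ref{RR1} propagates the commutation through the norm-convergent exponential series defining $\D(\qu)$. The commutation on $A$ follows from Proposition~\ref{xAq} (to pass $\alpha$ through $\as$ and $\asd$) together with Proposition~\ref{lft_mul}(c) and the identities $\alpha\qu = \qu\alpha$, $\alpha\oqu = \oqu\alpha$ in the commutative ring $\C_I$.

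With this lemma in place, the first two identities are immediate. Proposition~\ref{Psqop}(i)--(ii), with $e^{i\theta\sigma(\widehat{n})}$ identified with $I_\pu$ on the slice, writes $S(\pu)^\dagger\as S(\pu)$ and $S(\pu)^\dagger\asd S(\pu)$ as $\C_I$-scalar left-combinations of $\as$ and $\asd$. Conjugating by $\D(\qu)$, the lemma lets $I_\pu$ and $\overline{I}_\pu$ pass through $\D(\qu)^\dagger$, and Proposition~\ref{dis} substitutes $\as\mapsto\as+\qu$ and $\asd\mapsto\asd+\oqu$; collecting the four resulting terms yields the stated formulae (the $\I_2$ factors are the matrix shadows of the real scalars $\cosh|\pu|$ in the $2\times 2$ representation of $\quat$).

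The third identity follows the same template applied to Proposition~\ref{Psqop}(iii), after rewriting $\sinh|\pu|\cosh|\pu| = \tfrac{1}{2}\sinh(2|\pu|)$. What is new is the conjugation of the quadratic operators $\asd\as$, $\as\asd$, $\as^2$ and $(\asd)^2$ by $\D(\qu)$: splitting each as a product of two singly-conjugated factors and applying Proposition~\ref{dis} produces expressions such as $(\as+\qu)^2$ and $(\asd+\oqu)(\as+\qu)$, which I would then expand using Proposition~\ref{xAq} to reorder cross terms like $\qu\as$ and $\oqu\asd$ as the intended left multiplications $\qu\cdot\as$ and $\oqu\cdot\asd$. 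Substituting, and again passing $I_\pu,\overline{I}_\pu$ through $\D(\qu)^\dagger$ via the lemma, yields exactly the stated expression. The only delicate point throughout is careful book-keeping of the order of operators and left multiplications; the restriction $\pu,\qu\in\C_I$ is precisely what neutralises the obstruction encountered at the end of the previous section, where $\D(\qu)^\dagger I_\pu\cdot\asd\D(\qu)$ could not be simplified for general $\qu\in\quat$.
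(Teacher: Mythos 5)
Your proposal is correct and follows essentially the same route as the paper, whose proof is the single sentence that the result is straightforward from Propositions~\ref{Psqop} and~\ref{dis}. Your explicit auxiliary lemma --- that left multiplication by any $\alpha\in\C_I$ commutes with $\D(\qu)$ when $\qu\in\C_I$, proved via Proposition~\ref{xAq}, Proposition~\ref{lft_mul}(c) and Remark~\ref{RR1} --- is exactly the detail the paper leaves implicit, and it correctly isolates why the slice hypothesis removes the obstruction noted for general quaternions.
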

\begin{proof} Proof is straight forward from the results of the Propositions \ref{Psqop} and \ref{dis}.
\end{proof}
Using these relations all the desired expectation values and variances can be obtained. See for details \cite{TM}.

\subsection{Squeezed coherent states}
The squeezed coherent states are defined as $\eta_\pu^{\qu}=\D(\qu)S(\pu)\Phi_0$ \cite{Rodney, Gaz}. We briefly provide some formulas for these states. Once again we are in a quaternion slice $\C_I$ and $\pu$ and $\qu$ are as in the previous section.
\begin{proposition}The operators $\D(\qu)$ and $S(\pu)$ satisfy the following relations.
\begin{eqnarray*}
S^{\dagger}(\pu)\D(\qu)^{\dagger}\as\D(\qu)S(\pu)&=&\cosh{|\pu|}~\as+I_\pu\sinh{\pu}~\asd+\qu\\
S^{\dagger}(\pu)\D(\qu)^{\dagger}\asd\D(\qu)S(\pu)&=&\cosh{|\pu|}~\asd+\overline{I}_\pu\sinh{\pu}~\as+\oqu\\
S^{\dagger}(\pu)\D(\qu)^{\dagger}\asd\as\D(\qu)S(\pu)&=&\cosh^2{|\pu|}~\asd\as+\frac{1}{2}I_\pu\sinh(2|\pu|)~(\asd)^2
+\qu\cosh{|\pu|}~\asd\\
&+&\frac{1}{2}\overline{I}_\pu\sinh(2|\pu|)~\as^2+\sinh^2{|\pu|}~\as\asd+\overline{I}_\pu\qu\sinh{|\pu|}~\as\\
&+&\oqu\cosh{|\pu|}~\as+I_\pu\oqu\sinh{|\pu|}~\asd+|\qu|^2.
\end{eqnarray*}
\end{proposition}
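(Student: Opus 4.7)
The plan is to mirror the strategy used for the preceding proposition on two photon coherent states: conjugate each of the operators $\as$, $\asd$, $\asd\as$ first by the displacement operator using Proposition \ref{dis}, and then by the squeeze operator using Proposition \ref{Psqop}. Concretely, I write each left-hand side in the nested form $S(\pu)^\dagger [\D(\qu)^\dagger X \D(\qu)] S(\pu)$ and substitute the known formulas.

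For the first identity, Proposition \ref{dis}(i) gives $\D(\qu)^\dagger \as \D(\qu) = \as + \qu$; conjugating by $S(\pu)$ and invoking Proposition \ref{Psqop}(i) yields the claim, once one observes that the scalar term $\qu$ (shorthand for the left-multiplication operator $\qu \cdot I_{\HI^B_r}$) is unaffected by the conjugation, because by Proposition \ref{xAq} left multiplication by $\qu$ commutes with both $\as$ and $\asd$, hence with $S(\pu)$ and $S(\pu)^\dagger$. The second identity is obtained in the same way from Proposition \ref{dis}(ii) and Proposition \ref{Psqop}(ii).

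For the third identity, I insert $\D(\qu)\D(\qu)^\dagger = I$ between $\asd$ and $\as$ and expand,
\[
\D(\qu)^\dagger \asd \as \D(\qu) = (\D(\qu)^\dagger \asd \D(\qu))(\D(\qu)^\dagger \as \D(\qu)) = (\asd + \oqu)(\as + \qu),
\]
and use Proposition \ref{xAq} to rewrite this as $\asd\as + \qu\cdot\asd + \oqu\cdot\as + |\qu|^2$. Then I apply $S(\pu)^\dagger(\,\cdot\,)S(\pu)$ term by term. The quadratic part $\asd\as = N$ is handled directly by Proposition \ref{Psqop}(iii), using $\sinh|\pu|\cosh|\pu| = \tfrac{1}{2}\sinh(2|\pu|)$. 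Each linear term of the form $\qu\cdot\asd$ or $\oqu\cdot\as$ is reduced by pushing the scalar through $S(\pu)^\dagger$ and $S(\pu)$ via Proposition \ref{xAq} and then invoking the appropriate clause of Proposition \ref{Psqop}. Finally $S(\pu)^\dagger |\qu|^2 S(\pu) = |\qu|^2$ since $|\qu|^2$ is real and $S(\pu)$ is unitary.

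The main subtlety, and the reason the standing hypothesis $\pu,\qu \in \C_I$ is indispensable, is that after applying Proposition \ref{Psqop} one encounters scalar products such as $\qu\overline{I}_\pu$ and $\oqu I_\pu$, which are equal to the reversed forms $\overline{I}_\pu\qu$ and $I_\pu\oqu$ appearing in the stated formula only because all scalars lie in a single commutative slice. With slice commutativity and Proposition \ref{xAq} in hand, the remaining work is pure bookkeeping: collect the $(\asd)^2$, $\as^2$, $\asd\as$, $\as\asd$, $\asd$, $\as$ and constant contributions from the four summands of $(\asd+\oqu)(\as+\qu)$ and match them against the right-hand side. I expect no conceptual obstacle beyond respecting the left-multiplication order conventions throughout.
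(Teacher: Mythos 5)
Your proposal is correct and takes essentially the same route as the paper, whose proof is simply the remark that the identities follow from Propositions \ref{Psqop} and \ref{dis}: you conjugate first by $\D(\qu)$, then by $S(\pu)$, and the bookkeeping you describe (including the insertion of $\D(\qu)\D(\qu)^\dagger$ and the use of Proposition \ref{xAq}) reproduces the stated formulas exactly. The only point worth tightening is that commuting the left multiplication by $\qu$ or $\oqu$ past $S(\pu)$ already requires $\qu\pu=\pu\qu$ (the slice hypothesis), not merely Proposition \ref{xAq}, so the role of $\C_I$ enters slightly earlier than the scalar reorderings you single out.
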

\begin{proof}
Proof is straight forward from Propositions \ref{Psqop} and \ref{dis}.
\end{proof}
Once again using these relations all the required expectation values and variances can be obtained \cite{TM}.
\section{Right quaternionic fermionic coherent states for one mode}
The material in this section has not appeared in the literature, however, computationally, it is somehow similar to the squeezed states. Once again using a left multiplication on a right quaternionic Hilbert space we present quaternionic fermionic coherent states for one mode. These states are superpositions of number states $|n\rangle$ but only $n=0$ or $n=1$. These states are used in atomic and nuclear physics \cite{Gaz}.\\
Since $\{|0\rangle, |1\rangle\}$ is the basis, according to Proposition \ref{lft_mul} (f), these vectors commute with quaternions. The action of the creation and annihilation operators are
$$\as |0\rangle=0,\quad \as |1\rangle=|0\rangle,\quad \asd |0\rangle=|1\rangle,\quad \asd |1\rangle=0.$$
Therefore Proposition \ref{xAq} is valid for these operators. The creation and annihilation operators for a Fermionic mode has to obey the following commutation rules:
\begin{equation}\label{fer}
[\as, \asd]_+:=\as\asd+\asd\as=I_{\HIB},\quad [\as, \as]_+=0,\quad\text{and}\quad [\asd, \asd]_+=0.
\end{equation}
Note that $(\asd)^2=0$ and $\as^2=0$.
Any system having such commutation rules has the group $SU(2)$ as the dynamical group. The generators of this group are $\{\asd, \as, \asd\as-\frac{1}{2} \}$ \cite{Gaz}. Using the commutation rules \ref{fer}, we can easily obtain the usual commutation rules
\begin{equation}\label{fer_usual}
[\asd, \as]=2(\asd\as-\frac{1}{2}),\quad [\asd\as-\frac{1}{2}, \as]=-\as,\quad [\asd\as-\frac{1}{2}, \asd]=\asd.
\end{equation}
Define the fermionic states, with $\qu\in\quat$, such that
\begin{equation}\label{states}
\eta_0=e^{\qu\cdot\asd-\oqu\cdot\as}|0\rangle\quad\text{and}\quad \eta_1=e^{\qu\cdot\asd-\oqu\cdot\as}|1\rangle
\end{equation}
Now with the aid of Proposition \ref{xAq} we can see that
$$(\qu\cdot\asd-\oqu\cdot\as)^2=-|\qu|^2 I_{\HIB}.$$
From this relation we can easily obtain the following.
\begin{eqnarray*}
e^{\qu\cdot\asd-\oqu\cdot\as}&=&\sum_{n=0}^{\infty}\frac{(\qu\cdot\asd-\oqu\cdot\as)^n}{n!}\\
&=&\sum_{n=0}^{\infty}\frac{|\qu|^{2n}}{(2n)!}~I_{\HIB}+
\frac{\qu}{|\qu|}\sum_{n=0}^{\infty}\frac{(-1)^n|\qu|^{2n+1}}{(2n+1)!}\cdot\asd-
\frac{\oqu}{|\qu|}\sum_{n=0}^{\infty}\frac{(-1)^n|\qu|^{2n+1}}{(2n+1)!}\cdot\as\\
&=&\cos{|\qu|}~I_{\HIB}+\frac{\qu}{|\qu|}\sin{|\qu|}\cdot\asd-
\frac{\oqu}{|\qu|}\sin{|\qu|}\cdot\as.
\end{eqnarray*}
Hence we have the {\em right quaternionic fermionic states} for one mode as
\begin{eqnarray*}
\eta_0&=&\cos{|\qu|}~|0\rangle \I_2+\frac{\qu}{|\qu|}\sin{|\qu|}\cdot|1\rangle\\
\eta_1&=&\cos{|\qu|}~|1\rangle\I_2-\frac{\oqu}{|\qu|}\sin{|\qu|}\cdot|0\rangle.
\end{eqnarray*}
By choosing appropriate choice for $\qu$ it can be matched to the complex case. For example if we take $\qu\in\quat$ as $\displaystyle\qu=|\qu|e^{i\theta\sigma(\hat{n})}$ then we get
\begin{eqnarray*}
\eta_0&=&\cos{|\qu|}~|0\rangle \I_2+\sin{|\qu|}e^{i\theta\sigma(\hat{n})}\cdot|1\rangle\\
\eta_1&=&\cos{|\qu|}~|1\rangle\I_2-\sin{|\qu|}e^{i\theta\sigma(\hat{n})}\cdot|0\rangle,
\end{eqnarray*}
which is the quaternionic analogue of the complex case.
\section{Conclusion}
  With the aid of a left multiplication defined on a right quaternionic Hilbert space we have defined unitary squeeze operator. Using this operator pure squeezed states have been obtained, with all the necessary properties, analogous to the complex case. Using the displacement operator and the squeeze operator we have defined squeezed states. However, the noncommutativity of quaternions prevented us in getting desired results. It is just a technical issue, but there is no known technique to overcome this difficulty. The only way out of this difficulty is to consider quaternionic slices. We have briefly defined squeezed states on quaternion slices and provided some necessary formulas. We constructed quaternionic fermionic coherent states in complete analogy with their complex counterpart.

 Squeezed states have several applications, particularly in coding and transmission of information through optical devices.  The fermionic states gained application in atomic and nuclear physics. These aspects are well explained for example in \cite{Ali, Gaz, Yuen} and the many references therein. Since we have used the matrix representation of quaternions, the squeezed states and the fermionic states obtained in this note appear as matrix states. Further these states involve all four variables of quaternions. These features may give advantage in applications.

%

\begin{thebibliography}{XXXX}
	
	\bibitem{Ad} Adler, S.L., {\em Quaternionic quantum mechanics and Quantum fields}, Oxford University Press, New York, 1995.
	
	\bibitem{Ad2} Adler, S.L., Millard, A.C.,{\em Coherent states in quaternionic quantum mechanics}, J. Math. Phys.,{\bf 38} (1997), 2117-26.
	
	\bibitem{Ali} Ali, S.T., Antoine, J-P., Gazeau, J-P., {\em Coherent States, Wavelets and Their Generalizations}, Second edition, Springer,  New York, 2014.
	
	
	\bibitem{ack} {Alpay} D., {Colombo} F.,  Kimsey D. P., {\em The spectral theorem for  quaternionic unbounded normal operators based on the $S$-spectrum}, J. Math. Phys. {\bf 57} (2016), 023503.


	\bibitem{Al} Alpay, D., Colombo, F., Sabadini, I., Salomon, G., {\em The Fock space in the slice hyperholomorphic setting}, Hypercomplex Analysis: New perspective and applications, Trends in Mathematics, Birkh\"user, Basel (2014), 43-59.
	
	\bibitem{Al1} Alpay, D., Colombo, F., Sabadini, I., {\em Slice Hyperholomorphic Schur Analysis},  Birkh\"user, Basel, 2016.
	
	
	
	
	\bibitem{bvn} G. Birkhoff, J. von Neumann, {\em The logic of quantum mechanics}, Ann. of Math.,
	37 (1936), 823--843.
	
	
	
	\bibitem{NFC} Colombo, F., Sabadini, I., Struppa, D.C.,{\em Noncommutative Functional Calculus}, Birkhäuser Basel, Berlin, 2011.
	
	
	
	
	
	
	
	
	
	\bibitem{Gaz} Gazeau, J-P., {\em Coherent states in quantum physics}, Wiley-VCH,  Berlin (2009).
	
	
	
	
	\bibitem{ghimorper} Ghiloni, R., Moretti, W. and Perotti, A., {\em Continuous slice functional calculus in quaternionic Hilbert spaces\/,} Rev. Math. Phys. {\bf 25} (2013), 1350006.
	
	\bibitem{Gu} G\"ulebeck, K., Habetha, K., Spr\"obig, W., {\em Holomorphic functions in the plane and n-dimensional spaces}, Birkh\"auser Verlag, Basel (2008).

	\bibitem{Lo}Loudon, R., Knight, P.L., {\em Squeezed light}, J. Mod. Opt. {\bf 34} (1987), 709-759.

	\bibitem{Rodney} Loudon, R, {\em The quantum theory of light}, Third Ed., Oxford University Press, New York (2000).
	
	
	\bibitem{MuTh}Muraleetharan. B., Thirulogasanthar, K., {\em coherent state quantization of quaternions}, J. Math. Phys., {\bf 56} (2015), 083510.
	
	\bibitem{MTI} Muraleetharam, B., Thirulogasanthar, K., Sabadini, I., {\em A representation of Weyl-Heisenberg algebra in the quaternionic setting}, arXiv:1704.02946.
	
	\bibitem{TM} Thirulogasanthar, K., Muraleetharam, M., {\em Squeezed states in the quaternionic setting}, arXiv:1706.00686.
	
	\bibitem{Thi1}Thirulogasanthar, K., Twareque Ali, S., {\em Regular subspaces of a quaternionic Hilbert
		space from quaternionic Hermite polynomials and associated coherent states}, J. Math. Phys., {\bf 54} (2013), 013506.
	
	\bibitem{Thi2} Thirulogasanthar, K., Honnouvo, G., Krzyzak, A., {\em Coherent states and Hermite polynomials on Quaternionic Hilbert spaces}, J. Phys.A: Math. Theor. {\bf 43} (2010), 385205.
	
	
	
	\bibitem{Vis} Viswanath, K., {\em Normal operators on quaternionic Hilbert spaces}, Trans. Am. Math. Soc. {\bf 162} (1971), 337�350.

	\bibitem{Yuen} Youen, Y. P., {\em Two photon coherent states of the radiation field}, Phys. Rev. A. {\b 13} (1976), 2226-43.
	
	\bibitem{Za} Zhang, F., {\em Quaternions and Matrices of Quaternions}, Linear Algebra and its Applications, {\bf 251} (1997), 21-57.
	
\end{thebibliography}
\end{document}